

\documentclass[journal,12pt,onecolumn,draftclsnofoot]{IEEEtran}



%

%
\usepackage{cite}

%
\ifCLASSINFOpdf
\else
\fi
%
%

\usepackage{amsmath,graphicx}

\usepackage{amsfonts}
\usepackage{hyperref} 
\usepackage{dsfont} 
\usepackage{amssymb} 
\usepackage{bm} 
\usepackage{bbm} 
\usepackage{algorithm}
\usepackage[noend]{algpseudocode}
\usepackage{caption}
\usepackage{subcaption}
\usepackage{mathtools}
\usepackage{amsthm} 
\usepackage{color}
\usepackage{tikz}

\def\x{{\mathbf x}}
\def\z{{\mathbf z}}
\def\y{\mathbf{y}}

\def\d{\mathbf{d}}
\def\u{\mathbf{u}}

\def\h{\mathbf{h}}
\def\l{\mathbf{l}}
\def\n{\mathbf{n}}

\def\D{{\mathbf D}}
\def\X{{\mathbf X}}
\def\A{{\mathbf A}}
\def\M{{\mathbf M}}
\def\I{{\mathbf I}}

\def\R{{\mathbb R}}

\def\0{{\mathbf 0}}

\DeclareMathOperator{\Mr}{\mathbf{M^r}}

\DeclareMathOperator{\Mcp}{\mathbf{M^{c+}}}
\DeclareMathOperator{\Mcm}{\mathbf{M^{c-}}}
\DeclareMathOperator{\bfalpha}{\bm{\alpha}}

\DeclareMathOperator*{\argmin}{argmin}
\DeclareMathOperator*{\sign}{sign}
\DeclareMathOperator*{\dEU}{d}

\DeclareMathOperator*{\cl}{cl}
\DeclareMathOperator{\prox}{prox}

\newcommand{\Lcal}{\mathcal{L}_f}
\newcommand{\Cyt}{f^{-1}\{\y_t\}}
\newcommand{\Cy}{f^{-1}\{\y\}}
\newcommand{\Proj}{\Pi_{\Cy}}

\DeclarePairedDelimiter\floor{\lfloor}{\rfloor}

\newtheorem{theorem}{Theorem}
\newtheorem{prop}{Proposition}

\begin{document}
%
\title{Sparse Recovery and Dictionary Learning from Nonlinear Compressive Measurements}
%
%
%


\author{Lucas Rencker\thanks{L. Rencker, W. Wang and M. D. Plumbley are with the Centre for Vision, Speech and Signal Processing (CVSSP), University of Surrey, Guildford, UK.}, Francis Bach\thanks{F. Bach is with the INRIA-Sierra project-team, Paris, France.}, Wenwu Wang, Mark D. Plumbley\thanks{The research leading to these results has received funding from the European Union's H2020 Framework Programme (H2020-MSCA-ITN-2014) under grant agreement no 642685 MacSeNet.}}

\maketitle

\begin{abstract}
Sparse coding and dictionary learning are popular techniques for linear inverse problems such as denoising or inpainting. However in many cases, the measurement process is nonlinear, for example for clipped, quantized or 1-bit measurements. These problems have often been addressed by solving constrained sparse coding problems, which can be difficult to solve, and assuming that the sparsifying dictionary is known and fixed. Here we propose a simple and unified framework to deal with nonlinear measurements. We propose a cost function that minimizes the distance to a convex feasibility set, which models our knowledge about the nonlinear measurement. This provides an unconstrained, convex, and differentiable cost function that is simple to optimize, and generalizes the linear least squares cost commonly used in sparse coding. We then propose proximal based sparse coding and dictionary learning algorithms, that are able to learn directly from nonlinearly corrupted signals. We show how the proposed framework and algorithms can be applied to clipped, quantized and 1-bit data. 
\end{abstract}

\begin{IEEEkeywords}
Sparse coding, dictionary learning, nonlinear measurements, saturation, quantization, 1-bit sensing
\end{IEEEkeywords}

%
\IEEEpeerreviewmaketitle

\section{Introduction}
%
%
%
%

\IEEEPARstart{S}{parse} decomposition and dictionary learning are popular techniques for linear inverse problems in signal processing, such as denoising \cite{2006_Elad_Image,2008_Mairal_Sparse}, inpainting \cite{2005_Elad_Simultaneous,2012_Adler_Audio} or super-resolution \cite{2010_Yang_Image,2012_Yang_Coupled}. Sparse coding aims at finding a sparse set of coefficients $\bfalpha \in \mathbb{R}^{M}$ that accurately represents a signal $\x \in \mathbb{R}^{N}$ from a fixed overcomplete dictionary $\D \in \mathbb{R}^{N \times M}$, and is often formulated as:
\begin{equation}\label{eq:SC}
\min_{\bm{\alpha}} \|\x-\D\bfalpha\|_2^2 \quad + \lambda \Psi(\bfalpha),
\end{equation}
where $\Psi(\cdot)$ is a sparsity inducing regularizer, such as the $\ell_0$ pseudo-norm or the $\ell_1$-norm. Dictionary learning on the other hand, jointly learns the dictionary $\D$ and sparse coefficients $\bfalpha_t$ from a set of training signals $\{\x_t\}_{t=1,\dotsc,T}$:
\begin{equation}\label{eq:DL}
\min_{\D, \bm{\alpha}_t} \sum_{t=1}^T \left[ \|\x_t-\D\bfalpha_t\|_2^2 \quad + \lambda \Psi(\bfalpha_t) \right].
\end{equation}
However, the observed signals are often distorted or measured in a nonlinear way:
\begin{equation}
\y = f(\x),
\end{equation}
where $f$ is a nonlinear measurement function, and $\x$ is the original (unknown) clean signal. Examples of nonlinear distortions include \emph{clipping} (or saturation) and \emph{quantization}. Clipping is often due to dynamic range limitations in acquisition systems, when a signal reaches a maximum allowed amplitude, and the waveform is truncated above that threshold \cite{2010_Mansour_Color,2011_Laska_Democracy,2011_Adler_constrained,2013_Kitic_Consistent,2013_Defraene_Declipping,2014_Siedenburg_Audio,2015_Kitic_Sparsity,2016_Foucart_Sparse}. Quantization is a common process in analog-to-digital conversion that maps a signal from a continuous input space to a (finite) discrete space \cite{1998_Gray_Quantization}. More recently, 1-bit compression has attracted a lot of interest, as an extreme quantization scheme where samples are coded using only one bit per sample \cite{2008_Boufounos_1}, i.e. only measuring the signs of the signal. Clipping and quantization are non-linear, non-smooth, and \emph{compressive} measurements, i.e. the measurement map is non-invertible. For these reasons, the recovery of clipped/quantized signals is a challenging problem.

Recovering a signal from clipped or quantized measurements can be treated as linear inverse problems, by simply ignoring the nonlinearities, i.e. treating clipped samples as missing \cite{2009_Laska_Finite,2011_Laska_Democracy} and quantization error as additive noise \cite{2006_Candes_Near}. Similarly, 1-bit signals can be tackled by using the sign measurements directly as an input \cite{2008_Boufounos_1,2009_Boufounos_Greedy}. However using a formulation that is \emph{consistent} with the measurement process, i.e. that takes into account our knowledge about the nonlinear measurement function, has been shown to greatly improve the reconstruction \cite{2010_Mansour_Color,2011_Laska_Democracy,2011_Adler_constrained,2012_Adler_Audio,2013_Kitic_Consistent,2013_Defraene_Declipping,2014_Siedenburg_Audio,2015_Kitic_Sparsity,2016_Foucart_Sparse,2009_Dai_Distortion,2010_Zymnis_Compressed,2011_Jacques_Dequantizing,2013_Jacques_Quantized,2016_Moshtaghpour_Consistent,2008_Boufounos_1,2009_Boufounos_Greedy,2013_Jacques_Robust}. Specially tailored cost functions, constraints, or regularizers have independently been proposed to deal with clipped \cite{2010_Mansour_Color,2011_Laska_Democracy,2011_Adler_constrained,2012_Adler_Audio,2013_Kitic_Consistent,2013_Defraene_Declipping,2014_Siedenburg_Audio,2015_Kitic_Sparsity,2016_Foucart_Sparse}, quantized \cite{ 2009_Dai_Distortion,2010_Zymnis_Compressed,2011_Jacques_Dequantizing,2013_Jacques_Quantized,2016_Moshtaghpour_Consistent} or 1-bit \cite{2008_Boufounos_1,2009_Boufounos_Greedy,2013_Jacques_Robust} measurements. These formulations often involve solving constrained sparse coding problems, which can be difficult and computationally expensive to solve, since they involve 
computing expensive non-orthogonal projections at each iteration \cite{2015_Kitic_Sparsity,2016_Moshtaghpour_Consistent}.

Reconstruction methods proposed in the literature assume that the signal is sparse in some orthogonal basis \cite{2010_Zymnis_Compressed, 2009_Dai_Distortion,2011_Jacques_Dequantizing,2013_Jacques_Quantized,2016_Moshtaghpour_Consistent,2007_Boufounos_Sparse,2008_Boufounos_1,2009_Boufounos_Greedy,2013_Jacques_Robust,2013_Jacques_Quantized}, or in a fixed dictionary \cite{2011_Adler_constrained,2012_Adler_Audio,2013_Kitic_Consistent,2015_Kitic_Sparsity}. However it has been shown in a range of applications (when the measurements are linear), that \emph{learning} the dictionary from the observed data greatly improves the reconstruction compared to using a fixed dictionary \cite{2006_Elad_Image,2008_Mairal_Sparse,2011_Couzinie-Devy_Dictionary,2014_Mairal_Sparse}. 

\subsection{Contributions, and main results}

Our contributions are as follows: 
\begin{itemize}
\item We propose a unifying framework for signal recovery from nonlinear measurements such as clipping, quantization and 1-bit measurements, i.e. addressing these three problems in a unified fashion rather than individually. More specifically, we show how these problems can be formulated as minimizing the distance to a convex feasibility set, which models our assumption about the nonlinear measurement process. In particular, the proposed cost generalizes the linear least squares commonly used in sparse coding, as well as several cost functions proposed independently for declipping and 1-bit recovery.
\item Using properties of projection operators over convex sets, we show that the proposed cost function is continuous, convex and differentiable with Lipschitz gradient. Our main result uses Danskin's Min-Max theorem \cite{1967_Danskin_theory}, that allows us to derive a closed-form gradient for the proposed cost. 
\item We propose proximal-based consistent sparse coding, and dictionary learning algorithms, for nonlinear measurements. We show that these algorithms can be applied to clipped, quantized and 1-bit measurements.
\end{itemize}

\subsection{Organization of paper}

The paper is organised as follows: in Section \ref{sec:background}, we briefly review sparse recovery and dictionary learning from linear measurements, and some strategies proposed to deal with clipped, quantized and 1-bit measurements. In Section \ref{sec:cost} we propose a unifying cost function for nonlinear measurements, and show some of its properties. In Section \ref{sec:algorithms} we propose consistent sparse coding and dictionary learning algorithms using the proposed cost. Applications of the proposed framework, and links to previous work are presented in Section \ref{sec:Applications}. The performance of the proposed algorithm is presented in Section \ref{sec:evaluation}, before the conclusion is drawn. 

\subsection{Notation}

Bold lowercase letters $\x$ denote vectors and bold uppercase letters $\X$ denote matrices. The $i$-th element of a vector $\x$ is noted $x_i$. The identity matrix is noted $\I$. The p-norm of a vector $\x$ is $\|\x\|_p = (\sum x_i^p)^{1/p}$. The $\ell_0$ pseudo-norm (i.e. the number of non-zero elements) of $\x$ is noted $\|\x\|_0$. For a matrix $\X$, $\|\X\|_2$ denotes the matrix 2-norm, i.e. the largest singular value of $\X$. We denote $(\x)_+ = \max(\mathbf{0},\x)$ (where $\max$ is the element-wise maximum), and $(\x)_- = -(-\x)_+$. The floor (i.e. closest lower integer) of a vector is noted $\floor{\x}$ (applied element-wise). The sign (positive or negative) of each element of $\x$ is noted $\sign(\x)$. The element-wise multiplication is $\odot$. For a set $\mathcal{C}$, $\cl(\mathcal{C})$ is the closure of $\mathcal{C}$, and $\mathbbm{1}_{\mathcal{C}}(\cdot)$ is the indicator function of that set, i.e. $\mathbbm{1}_{\mathcal{C}}(\x) = 0$ when $\x \in \mathcal{C}$, $+\infty$ otherwise. The notation $\preceq$ denotes vector-wise inequality.

\section{Background}
\label{sec:background}

In this paper, we denote observation vectors as $\y \in \mathbb{R}^L$, with $\y = f(\x)$, where $\x \in \mathbb{R}^N$ is the original un-observed clean signal, and $f$ is a measurement or distortion function. We further assume that the signal $\x$ can be decomposed as $\x = \D \bfalpha$ with $\D \in \mathbb{R}^{N \times M}$ an \emph{overcomplete} dictionary of $M$ atoms ($N<M$), and $\bfalpha \in \mathbb{R}^M$ is a sparse activation vector. In this section we review the different types of linear and nonlinear measurement functions $f$, and the associated problem formulations appearing in the literature. 

\subsection{Sparse coding from linear measurements}

A widely studied case is when the measurement function is linear, i.e. $f(\x) = \M\x$ with $\M \in \mathbb{R}^{L \times N}$. The corresponding sparse coding problem is often formulated as:
\begin{equation}\label{eq:linear_SC}
\min_{\bfalpha} \|\y-\M\D\bfalpha\|_2^2+\lambda\Psi(\bfalpha).
\end{equation}
For example, $\M = \I$ (the identity matrix) corresponds to clean signals or signals subject to additive Gaussian noise \cite{2006_Elad_Image,2008_Mairal_Sparse}. When $\M$ is a diagonal binary matrix, \eqref{eq:linear_SC} corresponds to an inpainting problem \cite{2005_Elad_Simultaneous,2012_Adler_Audio}. 

\subsection{Sparse coding from nonlinear measurements}

Sparse coding from linear measurements have been extensively studied in the literature. Signal acquisition systems however often measure signals in a \emph{nonlinear} way, for examples in saturated, quantized or 1-bit measurements. In the following, we review reconstruction strategies proposed in the literature for saturated, quantized and 1-bit measurements.



\begin{figure}[h]
	\centering
	\begin{subfigure}[h!]{0.18\textwidth}
		\begin{tikzpicture}[scale=0.5]

\draw[->] (-4,0) -- (4,0) node[above] {$x_i$};
\draw	(0,0) node[below right] {0}
		(2,0) node[below] {$\theta^+$}
		(-2,0) node[below left] {$\theta^-$}
		(0,2) node[left] {$\theta^+$}
		(0,-2) node[below left] {$\theta^-$};

\draw[->] (0,-4) -- (0,4) node[above] {$y_i$};
\draw[dotted] (2,0) -- (2,2);
\draw[dotted] (0,2) -- (2,2);
\draw[dotted] (-2,0) -- (-2,-2);
\draw[dotted] (0,-2) -- (-2,-2);

\draw[thick] (-4,-2) -- (-2,-2) -- (2,2) -- (4,2);

\end{tikzpicture}
		\caption{Clipping}
		\label{fig:clipping_function}
	\end{subfigure}
	\hspace{0.5cm}
	\begin{subfigure}[h!]{0.18\textwidth}
		\begin{tikzpicture}[scale=0.5]

\newcommand{\var}{1.25}

\draw[->] (-4,0) -- (4,0) node[above] {$x_i$};;
\draw[->] (0,-4) -- (0,4) node[above] {$y_i$};
\draw	(0,0) node[below right] {0}
		(1*\var,0) node[anchor=north] {$\Delta$}
		(2*\var,0) node[anchor=north] {$2\Delta$}
		(-1*\var,0) node[below left] {$-\Delta$}
		(-2*\var,0) node[below left] {$-2\Delta$}
		(0,1/2*\var) node[anchor=east] {$\frac{\Delta}{2}$}
		(0,3/2*\var) node[anchor=east] {$\frac{3\Delta}{2}$}
		(0,5/2*\var) node[anchor=east] {$\frac{5\Delta}{2}$}
		(0,-1/2*\var) node[below left] {$-\frac{\Delta}{2}$}
		(0,-3/2*\var) node[below left] {$-\frac{3\Delta}{2}$};

\draw[thick] (0,1/2*\var) -- (1*\var,1/2*\var)
			 (1*\var,3/2*\var) -- (2*\var,3/2*\var)
			 (2*\var,5/2*\var) -- (3*\var,5/2*\var)
			 (-1*\var,-1/2*\var) -- (0,-1/2*\var)
			 (-1*\var,-3/2*\var) -- (-2*\var,-3/2*\var)
			 (-2*\var,-5/2*\var) -- (-3*\var,-5/2*\var);
			 
\draw[dotted] (\var,0) -- (1*\var,3/2*\var)
			 (2*\var,0) -- (2*\var,5/2*\var)
			 (-\var,-3/2*\var) -- (-1*\var,0)
			 (-2*\var,-5/2*\var) -- (-2*\var,0); 
			 
\draw[dotted] (0,3/2*\var) -- (1*\var,3/2*\var)
			  (0,5/2*\var) -- (2*\var,5/2*\var)
			  (-1*\var,-3/2*\var) -- (0,-3/2*\var)
			  (-2*\var,-5/2*\var) -- (0,-5/2*\var); 
			  
\fill[black] (0,1/2*\var) circle (2pt);
\fill[black] (1*\var,3/2*\var) circle (2pt);
\fill[black] (2*\var,5/2*\var) circle (2pt);
\fill[black] (-1*\var,-1/2*\var) circle (2pt);
\fill[black] (-2*\var,-3/2*\var) circle (2pt);
\fill[black] (-3*\var,-5/2*\var) circle (2pt);

\end{tikzpicture}
		\caption{Quantization}
		\label{fig:quantization_function}
	\end{subfigure}
	\hspace{1.3cm}
	\begin{subfigure}[h!]{0.18\textwidth}
		\begin{tikzpicture}[scale=0.5]

\draw[->] (-4,0) -- (4,0) node[above] {$x_i$};;
\draw[->] (0,-4) -- (0,4) node[above] {$y_i$};
\draw	(0,0) node[below right] {0}
		(0,3) node[anchor=east] {$+1$}
		(0,-3) node[above left] {$-1$};

\draw[thick] (0,3) -- (4,3)
			 (-4,-3) -- (0,-3);
			  
\fill[black] (0,3) circle (3pt);

\end{tikzpicture}
		\caption{1-bit sensing}
		\label{fig:1bit_function}
	\end{subfigure}
	\caption{Visualization of different nonlinear measurement functions $f$ (output $y_i = f(x_i)$ versus input $x_i$).}
	\label{fig:nonlinear_functions}
\end{figure}


\subsubsection{Clipped measurements}

We consider the case of hard clipping, where each sample $x_i$ is measured as:
\begin{equation}
y_i = f(x_i) = 
\begin{cases}
\theta^+ \quad \text{if} \quad x_i \geq \theta^+\\
\theta^- \quad \text{if} \quad x_i \leq \theta^-\\
x_i \quad \text{otherwise,}
\end{cases}
\end{equation}
where $\theta^+ > \theta^-$ are positive and negative clipping thresholds respectively (Figure \ref{fig:clipping_function}). This can be written in vector form as:
\begin{equation}\label{eq:clipping_function}
\y = f(\x) = \Mr \x + \theta^+ \Mcp \mathbf{1} + \theta^- \Mcm \mathbf{1},
\end{equation}
where $\mathbf{1}$ is the all-ones vector in $\mathbb{R}^{N}$, and $\Mr, \Mcp$ and $\Mcm$ are diagonal binary sensing matrices, that define the \emph{reliable} (i.e. unclipped), positive and negative clipped samples respectively and such that $\Mr + \Mcp +\Mcm = \I$. In practice, the clipping thresholds can be estimated from the measurement $\y$ as (e.g.) $\theta^+ = \max_i(y_i)$, and the sensing matrices as $[\Mcp]_{i,i} = 1$ if $y_i = \theta^+$, $0$ otherwise.

Declipping can be treated as a linear inverse problem by discarding the clipped samples, treating declipping as a linear \emph{inpainting} problem, i.e. solving \eqref{eq:linear_SC} with $\M = \Mr$ \cite{2009_Laska_Finite,2011_Laska_Democracy,2012_Adler_Audio}. However, the reconstruction can be improved by adding extra knowledge about the clipping process. Indeed, we know that the clipped samples should have an amplitude that is greater than the clipping threshold. This extra information can be enforced by solving the following constrained problem \cite{2009_Smaragdis_Dynamic,2010_Mansour_Color,2011_Adler_constrained,2013_Defraene_Declipping,2016_Foucart_Sparse}:
\begin{equation}\label{eq:constrained_declipping2}
\min_{\bfalpha} \Psi(\bfalpha) + \mathbbm{1}_{\mathcal{C}(\y)}(\D\bfalpha),
\end{equation}
where:
\begin{align}
\begin{split}
\mathcal{C}(\y) \triangleq \{\x| \Mr& \y=\Mr \x, \Mcp \x \succeq \theta^+\Mcp \mathbf{1}, \\
& \Mcm \x \preceq \theta^-\Mcm \mathbf{1}\}
\end{split}
\end{align}
is the \emph{clipping consistency} set. Eqn. \eqref{eq:constrained_declipping2} is a constrained, non-smooth and possibly non-convex sparse decomposition problem, which can be difficult to solve. An alternating direction method of multipliers (ADMM) \cite{2011_Boyd_Distributed} based algorithm was proposed in \cite{2015_Kitic_Sparsity} to solve \eqref{eq:constrained_declipping2}. When the dictionary is a tight frame, the algorithm can be computed efficiently \cite{2018_Zaviska_Revisiting}. However, for general dictionaries, the algorithm is computationally expensive, since it involves computing non-orthogonal projections at each iteration \cite{2015_Kitic_Sparsity}. A soft consistency metric was used in \cite{2011_Laska_Democracy,2013_Kitic_Consistent,2014_Siedenburg_Audio}:
\begin{equation}\label{eq:consistentIHT}
\begin{split}
\min_{\bfalpha} \frac{1}{2}\big[\|\Mr&(\y-\D\bfalpha)\|_2^2 + \|\Mcp (\theta^+\mathbf{1}-\D\bfalpha)_+\|_2^2\\ + & \|\Mcm (\theta^-\mathbf{1}-\D\bfalpha)_-\|_2^2\big] +\lambda\Psi(\bfalpha),
\end{split}
\end{equation}
The data-fidelity term in \eqref{eq:consistentIHT} is convex and smooth, so methods based on iterative hard thresholding \cite{2009_Blumensath_Iterative,2013_Kitic_Consistent} or proximal algorithms \cite{2014_Parikh_Proximal,2014_Siedenburg_Audio} can directly be applied. 

\subsubsection{Quantized measurements}

Quantization maps a continuous input space onto a finite discrete set of codewords $\mathcal{Y} = \{y_1, ..., y_p\}$. A quantization map $f$ is defined by a set of quantization levels $\mathcal{R}_q = [l_q, u_q)$ and the relation $x \in \mathcal{R}_q \Leftrightarrow f(x) = y_q$, i.e. samples that fall into $\mathcal{R}_q$ are quantized as $y_q$. For example in the case of a uniform mid-riser quantizer, $\mathcal{R}_q = [\Delta q, \Delta(q+1))$, and the quantization function can be written as:
\begin{equation}
f(\x) = \Delta \left\lfloor\frac{\x}{\Delta}\right\rfloor + \frac{\Delta}{2},
\end{equation}
where $\Delta > 0$ is the quantization bin width (Figure \ref{fig:quantization_function}).

De-quantization can be treated as a simple linear inverse problem by considering quantization error as additive noise, and using a linear sparse model \eqref{eq:linear_SC} \cite{2006_Candes_Near}. However it has been shown that using a more accurate model of the quantization process improves the reconstruction. Bayesian approaches \cite{2010_Zymnis_Compressed}, $\ell_p$-based data-fidelity terms \cite{2011_Jacques_Dequantizing}, or specially-tailored cost functions \cite{2013_Jacques_Quantized} have been proposed in the literature to enforce quantization consistency. Constrained formulations were proposed in \cite{2009_Dai_Distortion,2016_Moshtaghpour_Consistent} in order to enforce consistency:
\begin{equation}\label{eq:constrained_quantization}
\min_{\bfalpha} \Psi(\bfalpha) + \mathbbm{1}_{\mathcal{R}}(\D\bfalpha)
\end{equation}
where $\mathcal{R} = \mathcal{R}_{q_1} \times ... \times \mathcal{R}_{q_N}$ and $\mathcal{R}_{q_i}$ is the quantization region associated with the $i$-th sample $y_i$. However similarly to the constrained declipping scenario \eqref{eq:constrained_declipping2}, solving \eqref{eq:constrained_quantization} can be computationally expensive.

\subsubsection{1-bit measurements}

1-bit measurement can be seen as an extreme quantization using only one bit per sample, or similarly an extreme saturation where the clipping level tends to zero:
\begin{equation}
f(\x) = \sign(\x).
\end{equation}
1-bit measurement consistency has also been proposed in \cite{2008_Boufounos_1,2009_Boufounos_Greedy}:
\begin{equation}\label{eq:1bit_l2}
\min_{\bfalpha} \frac{1}{2}\|\big(\y\odot(\D\bfalpha)\big)_-\|_2^2 + \lambda \Psi(\bfalpha)
\end{equation}

\subsection{Dictionary learning}

Dictionary learning from linear measurements can be formulated as \cite{2014_Mairal_Sparse}:
\begin{equation}\label{eq:linear_DL}
\min_{\D \in \mathcal{D}, \bm{\alpha}_t}  \sum_{t=1}^T \big[\|\x_t-\D\bfalpha_t\|_2^2+\lambda \Psi(\bfalpha_t)\big]
\end{equation}
where $\{\x_t\}_{1...T}$ is a collection of $T$ signals in $\mathbb{R}^N$, and $\bfalpha_t$ are the corresponding sparse activation vectors. The dictionary $\D$ is often constrained to be in the convex set $\mathcal{D} = \{\D \in \mathbb{R}^{N \times M}|\forall i, \|\d_i\|_2 \leq 1\}$ in order to avoid scaling ambiguity \cite{2014_Mairal_Sparse}. Many algorithms have been proposed in the literature to solve \eqref{eq:linear_DL}, such as MOD \cite{1999_Engan_Method}, K-SVD \cite{2006_Elad_Image} stochastic gradient descent \cite{1997_Olshausen_Sparse,2009_Mairal_Online}, or SimCO \cite{2012_Dai_Simultaneous}. 

Dictionary learning for 1-bit data have recently been addressed in \cite{2016_Zayyani_Dictionary,2016_Shen_Online}. To our knowledge, dictionary learning from saturated and quantized measurements, however, has not been addressed in the literature\footnote{We presented some preliminary results on dictionary learning for declipping in \cite{2018_Rencker_Consistent}}. In the next sections, we propose a unifying and computationally tractable framework for sparse coding and dictionary learning from nonlinear measurements.

\section{A unifying framework for nonlinear signal reconstruction}
\label{sec:cost}

Let $f:\mathcal{X} \mapsto f(\mathcal{X}) = \mathcal{Y}$ be an arbitrary - and possibly nonlinear - measurement function from a clean input space $\mathcal{X}$ to a measurement space $\mathcal{Y}$. For a measured signal $\y \in \mathcal{Y}$, we propose a cost function (or data-fidelity term) defined for all $\x \in \mathcal{X}$ as:
\begin{equation}\label{eq:cost}
\Lcal(\x, \y) = \dEU(\x, \Cy)
\end{equation}
where $\Cy$ is the \emph{pre-image} of $\{\y\}$ under the measurement map $f$:
\begin{equation}
\Cy \triangleq \{\x \in \mathcal{X}|f(\x) = \y\},
\end{equation}
and $\dEU(\x,\mathcal{C})$ is the distance between $\x$ and the set $\mathcal{C}$, defined for a (pointwise) distance metric $\dEU(\cdot, \cdot)$ as:
\begin{equation}\label{eq:distance}
\dEU (\x, \mathcal{C}) \triangleq \inf_{\z \in \mathcal{C}} \dEU(\x,\z).
\end{equation}
The set $\Cy$ can be seen as a \emph{feasibility set}, i.e. the set of all possible input signals $\x \in \mathcal{X}$ that could have generated $\y$ when measured through $f$. The cost \eqref{eq:cost} thus measures how ``close" a signal $\x$ is to the feasibility set associated with the measurement $\y$. Minimizing \eqref{eq:cost} thus promotes consistency since it minimizes the distance between an estimate $\x$ and its feasibility set $\Cy$. However unlike constrained formulations \eqref{eq:constrained_declipping2} or \eqref{eq:constrained_quantization}, here measurement-consistency is enforced in a simple unconstrained way. 

\subsection{Assumptions on $f$, and choice of distance}

Without any assumptions on the feasibility sets $\Cy$ and the metric $\d(\cdot,\cdot)$, $\x \mapsto \Lcal(\x,\y)$ is in general non-convex and non-smooth, and therefore difficult to optimize. However we show here that under certain conditions, the proposed cost \eqref{eq:cost} exhibits convenient properties such as convexity, and differentiability with Lipschitz gradient.

The first assumption is that for all $\y \in \mathcal{Y}$, the pre-image set $\Cy$ is convex. This assumption is verified by many measurement functions $f$ found in practice, such as linear measurements, and nonlinear measurements such as clipping, quantization and 1-bit (see Section \ref{sec:Applications}). For separable functions $f(\x) = \left[f_1(x_1), \dotsc, f_N(x_N)\right]$, a simple sufficient condition such that $\Cy$ is convex for all $\y$ is that each $f_i(\cdot)$ is monotonic (but not necessarily strictly monotonic). This condition is however not necessary, and the set of functions with convex pre-images includes a wider range of measurement functions. Convexity of the pre-image sets ensures that the proposed cost \eqref{eq:cost} is convex and differentiable, as will be shown in this section.

Various metrics can be chosen for the pointwise distance in \eqref{eq:distance}. A popular metric to measure point-wise distances is the squared Euclidean distance $\|\cdot\|_2^2$. The squared Euclidean distance, or least squares, is popular in sparse coding and dictionary learning, in part due to its convexity and Lipschitz differentiability. Other pointwise distances are available, such as $p$-norms $\|\cdot\|_p^p$ with $0 < p \leq 1$, which might be more appropriate depending on the data at hand. However these come at a cost of non-differentiability and/or non-convexity, which often lead to more difficult optimization problems. In this paper, for simplicity, and in order to favour computationally efficient methods, we focus on the least-squares distance. In fact, we show in this section that the proposed cost \eqref{eq:cost} benefits from the same properties as -- and naturally extends -- the linear least-squares cost commonly used in sparse coding.

In the remainder of this paper, we assume $\mathcal{X} = \mathbb{R}^N$, $\d(\x,\y) = \frac{1}{2}\|\x-\y\|_2^2$, and $\Cy$ convex for all $\y \in \mathcal{Y}$. Note that for a set $\mathcal{C}$, $\dEU(\x,\mathcal{C}) = \dEU(\x,\cl(\mathcal{C}))$, so we can assume without loss of generality that $\Cy$ is closed. Note also that since $\mathcal{Y} = f(\mathcal{X})$, $\Cy$ is non-empty for all $\y \in \mathcal{Y}$.

\subsection{Properties of the proposed cost function}

We consider a fixed $\y \in \mathcal{Y}$, and review the properties of $\x \mapsto \Lcal(\x,\y)$. Since $\Cy$ is non-empty, closed and convex, the Projection Theorem (Appendix \ref{app:projection_theorem}) ensures existence and uniqueness of a minimizer $\z^*$ of $\|\x-\z\|^2_2$ in $\Cy$. This minimizer is defined as the orthogonal projection $\Proj(\x)$ of $\x$ on the set $\Cy$. In particular, the infinimum in \eqref{eq:distance} is attained, and $\Lcal(\x,\y)$ can be redefined as:
\begin{equation}
\Lcal(\x,\y) = \frac{1}{2}\|\x-\Proj(\x)\|_2^2.
\end{equation}
In addition, the continuity property of the projection operator on a convex set ensures that $\Lcal(\cdot,\y)$ is a continuous function (as a composition of continuous functions). We now present some properties of $\Lcal(\cdot,\y)$, which make it suitable for a range of optimization algorithms:

\begin{prop}\label{prop:convexity}
$\Lcal(\cdot,\y)$ is a convex cost function.
\end{prop}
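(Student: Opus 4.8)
The plan is to show that $\x \mapsto \Lcal(\x,\y) = \frac{1}{2}\|\x - \Proj(\x)\|_2^2$ is convex by exhibiting it as a pointwise infimum (or supremum) that preserves convexity, rather than by directly manipulating the projection operator, which is nonlinear and awkward to handle. First I would rewrite the cost using its definition as an infimal projection: for fixed $\y$,
\begin{equation}
\Lcal(\x,\y) = \inf_{\z \in \Cy} \tfrac{1}{2}\|\x - \z\|_2^2 = \inf_{\z \in \R^N} \Big[ \tfrac{1}{2}\|\x - \z\|_2^2 + \mathbbm{1}_{\Cy}(\z) \Big].
\end{equation}
This is precisely the infimal convolution (in the $\z$ variable) of the jointly convex function $g(\x,\z) = \tfrac{1}{2}\|\x-\z\|_2^2$ with the indicator of the convex set $\Cy$.

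The key step is then the standard convexity-preservation result: if $h(\x,\z)$ is jointly convex in $(\x,\z)$, then the partial infimum $\x \mapsto \inf_{\z} h(\x,\z)$ is convex (provided it does not take the value $-\infty$, which here is guaranteed since $\Lcal \geq 0$). So I would verify that $h(\x,\z) = \tfrac{1}{2}\|\x-\z\|_2^2 + \mathbbm{1}_{\Cy}(\z)$ is jointly convex: the quadratic $\tfrac{1}{2}\|\x-\z\|_2^2$ is jointly convex as a composition of the convex squared norm with the linear map $(\x,\z) \mapsto \x-\z$, and $\mathbbm{1}_{\Cy}(\z)$ is convex in $\z$ (hence jointly convex as it is independent of $\x$) precisely because $\Cy$ is convex by assumption. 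The sum of two jointly convex functions is jointly convex, and the infimum over $\z$ then yields convexity of $\Lcal(\cdot,\y)$.

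The main obstacle — and really the only place where the argument could be mishandled — is justifying the convexity-preservation-under-partial-infimum lemma cleanly, and confirming that its hypotheses hold. The properness/non-degeneracy condition ($\Lcal > -\infty$ everywhere and $\Lcal \not\equiv +\infty$) is immediate here: $\Lcal(\x,\y) \geq 0$ for all $\x$, and it is finite since $\Cy$ is nonempty (as noted in the excerpt, $\Cy \neq \emptyset$ because $\y \in \mathcal{Y} = f(\mathcal{X})$). An alternative, fully self-contained route avoids the lemma entirely: take $\x_1, \x_2$ and $t \in [0,1]$, set $\z_1 = \Proj(\x_1)$, $\z_2 = \Proj(\x_2)$, and use $t\z_1 + (1-t)\z_2 \in \Cy$ (by convexity of $\Cy$) as a feasible competitor for the projection of $t\x_1 + (1-t)\x_2$. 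Then
\begin{equation}
\Lcal(t\x_1 + (1-t)\x_2, \y) \leq \tfrac{1}{2}\big\| t\x_1 + (1-t)\x_2 - (t\z_1 + (1-t)\z_2) \big\|_2^2,
\end{equation}
and convexity of $\tfrac{1}{2}\|\cdot\|_2^2$ bounds the right-hand side by $t \Lcal(\x_1,\y) + (1-t)\Lcal(\x_2,\y)$, giving the result directly. I would likely present this direct argument, as it is short, elementary, and relies only on the convexity of $\Cy$ and of the squared norm — exactly the ingredients already assumed.
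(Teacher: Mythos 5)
Your proof is correct and takes essentially the same approach as the paper: the paper's proof is precisely your first route, observing that $\Lcal(\cdot,\y)$ is the partial minimization of the jointly convex function $(\x,\z)\mapsto\frac{1}{2}\|\x-\z\|_2^2$ over the non-empty convex set $\Cy$ and invoking the standard convexity-preservation result (cited there from Boyd and Vandenberghe, Section 3.2.5). Your alternative direct argument with the competitor $t\z_1+(1-t)\z_2\in\Cy$ is just a self-contained unpacking of that same lemma and is equally valid.
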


\begin{prop}\label{prop:smoothness}
$\Lcal(\cdot,\y)$ is differentiable, with gradient:
\begin{equation}
\nabla_\x \Lcal(\x,\y) = \x-\Pi_{\Cy}(\x).
\end{equation}
\end{prop}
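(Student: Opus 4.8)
The plan is to establish differentiability directly from the two defining inequalities of the orthogonal projection, obtaining a sandwich (upper and lower) estimate on the increment $\Lcal(\x+\h,\y)-\Lcal(\x,\y)$ whose leading term is exactly $\langle \x-\Proj(\x),\h\rangle$. The only tools needed are ones already available: existence and uniqueness of $\Proj(\x)$ from the Projection Theorem, and the non-expansiveness of the projection onto a closed convex set, i.e. $\|\Proj(\x+\h)-\Proj(\x)\|_2 \le \|\h\|_2$.

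First I would fix $\x$ and a perturbation $\h$, and write $\z=\Proj(\x)$ and $\z'=\Proj(\x+\h)$. For the upper bound, I note that $\z$ is feasible for the perturbed point, so $\Lcal(\x+\h,\y)\le \tfrac12\|\x+\h-\z\|_2^2$; expanding the square gives
\[
\Lcal(\x+\h,\y)-\Lcal(\x,\y)\le \langle \x-\z,\h\rangle + \tfrac12\|\h\|_2^2.
\]
For the lower bound, I use that $\z'$ is feasible for $\x$, so $\Lcal(\x,\y)\le \tfrac12\|\x-\z'\|_2^2$; expanding $\|\x-\z'\|_2^2$ around $\x+\h-\z'$ yields
\[
\Lcal(\x+\h,\y)-\Lcal(\x,\y)\ge \langle \x+\h-\z',\h\rangle - \tfrac12\|\h\|_2^2.
\]

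Next I would reconcile the two bounds by showing the lower bound's linear term agrees with $\langle \x-\z,\h\rangle$ up to second order. Writing $\langle \x+\h-\z',\h\rangle = \langle \x-\z,\h\rangle + \langle \h-(\z'-\z),\h\rangle$ and invoking non-expansiveness, $\|\z'-\z\|_2\le\|\h\|_2$, the error term is bounded in absolute value by $2\|\h\|_2^2$. Combining the two displays then gives $\Lcal(\x+\h,\y)-\Lcal(\x,\y)=\langle \x-\Proj(\x),\h\rangle + O(\|\h\|_2^2)$, which is precisely differentiability with the claimed gradient $\nabla_\x\Lcal(\x,\y)=\x-\Proj(\x)$.

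An alternative route, matching the paper's stated use of Danskin's Min-Max theorem, is to view $\Lcal(\cdot,\y)$ as the pointwise minimum $\inf_{\z\in\Cy}\tfrac12\|\x-\z\|_2^2$ of a family of functions smooth in $\x$, whose unique minimizer is $\Proj(\x)$; Danskin's theorem then returns the gradient of the inner objective evaluated at the minimizer, namely $\x-\Proj(\x)$. The main obstacle here is that $\Cy$ need not be compact (e.g. the clipping consistency set is unbounded), so the theorem does not apply verbatim; I would first restrict the minimization to the compact sublevel set $\{\z\in\Cy:\tfrac12\|\x-\z\|_2^2\le \tfrac12\|\x-\z_0\|_2^2\}$ for a fixed $\z_0\in\Cy$, which is closed and bounded and contains the minimizer, and verify that this restriction leaves the infimum unchanged in a neighbourhood of $\x$. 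In the direct argument above this difficulty never arises, which is why I would present that estimate as the primary proof.
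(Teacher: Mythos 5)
Your primary argument is correct and takes a genuinely different, more elementary route than the paper. The paper proves this proposition in one step by invoking Danskin's Min-Max theorem (Appendix B), using the uniqueness of the minimizer $\Pi_{\Cy}(\x)$ guaranteed by the Projection Theorem; your proof instead derives differentiability directly from a two-sided estimate on the increment, using only the optimality of $\Proj(\x)$ and $\Proj(\x+\h)$ for the respective points together with non-expansiveness. Both of your displayed bounds check out: the upper bound is exact from feasibility of $\Proj(\x)$ at $\x+\h$, the lower bound from feasibility of $\Proj(\x+\h)$ at $\x$, and the reconciliation term $\langle \h-(\Proj(\x+\h)-\Proj(\x)),\h\rangle$ is indeed $O(\|\h\|_2^2)$ by non-expansiveness, so the increment is $\langle \x-\Proj(\x),\h\rangle+O(\|\h\|_2^2)$, which is Fr\'echet differentiability with the claimed gradient. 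What your approach buys is self-containedness and the avoidance of the compactness hypothesis in the stated form of Danskin's theorem; the paper is aware of this issue and dismisses it in a footnote (compactness is only used there to guarantee existence of the minimizer, which the Projection Theorem supplies independently), so your proposed restriction to a compact sublevel set is a legitimate but not strictly necessary repair. What the paper's route buys is brevity and a natural generalization: Danskin's theorem still yields directional derivatives when the projection is non-unique, which the paper exploits in its later discussion of non-convex pre-image sets.
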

\begin{prop}\label{prop:lipschitz}
The gradient $\nabla_{\x} \Lcal(\x, \y)$ is Lipschitz continuous with constant $L = 1$, i.e. for all $\x_1, \x_2$:
\begin{equation}
\|\nabla_{\x}\Lcal(\x_1, \y)-\nabla_{\x}\Lcal(\x_2, \y)\|^2_2 \leq \|\x_1-\x_2\|_2^2
\end{equation}
\end{prop}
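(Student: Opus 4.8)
The plan is to reduce the Lipschitz estimate to the non-expansiveness of the orthogonal projection onto a closed convex set. By Proposition \ref{prop:smoothness} the gradient is $\nabla_\x \Lcal(\x,\y) = \x-\Proj(\x)$, so the claimed inequality is exactly the statement that the operator $\x \mapsto \x-\Proj(\x)$ is non-expansive. I would write $p_i = \Proj(\x_i)$ for $i=1,2$ and first record the variational characterisation of the projection supplied by the Projection Theorem (Appendix \ref{app:projection_theorem}): since $\Cy$ is closed and convex, for every $\z \in \Cy$ one has $\langle \x_i-p_i, \z-p_i \rangle \leq 0$.

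Next I would combine the two instances of this obtuse-angle inequality obtained by taking $\z = p_2$ in the characterisation of $p_1$ and $\z = p_1$ in the characterisation of $p_2$. Adding them and rearranging (using $p_2-p_1 = -(p_1-p_2)$) yields the \emph{firm} non-expansiveness estimate
\[
\|p_1-p_2\|_2^2 \leq \langle \x_1-\x_2, p_1-p_2 \rangle .
\]
This is the key inequality, and I expect it to be the only substantive step; everything after it is elementary algebra.

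Finally, setting $\mathbf{a} = \x_1-\x_2$ and $\mathbf{b} = p_1-p_2$, the difference of gradients is $\mathbf{a}-\mathbf{b}$, and expanding
\[
\|\mathbf{a}-\mathbf{b}\|_2^2 = \|\mathbf{a}\|_2^2 - 2\langle \mathbf{a},\mathbf{b} \rangle + \|\mathbf{b}\|_2^2
\]
together with the firm non-expansiveness bound $\langle \mathbf{a},\mathbf{b} \rangle \geq \|\mathbf{b}\|_2^2$ gives $\|\mathbf{a}-\mathbf{b}\|_2^2 \leq \|\mathbf{a}\|_2^2 - \|\mathbf{b}\|_2^2 \leq \|\mathbf{a}\|_2^2$, which is precisely the desired bound with $L=1$. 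The main conceptual point — and the only place where care is needed — is recognising that mere non-expansiveness of $\Proj$ is not quite enough to close the estimate in one line: one needs the stronger firm non-expansiveness (equivalently, co-coercivity of $\I-\Proj$), which is exactly what the summed variational inequalities deliver. The remaining manipulations are routine.
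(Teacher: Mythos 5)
Your proof is correct and follows essentially the same route as the paper's: both derive the firm non-expansiveness inequality $\|\Proj(\x_1)-\Proj(\x_2)\|_2^2 \leq (\Proj(\x_1)-\Proj(\x_2))^T(\x_1-\x_2)$ by summing the two variational inequalities from the Projection Theorem, and then expand the squared norm of the gradient difference to conclude. Your remark that plain non-expansiveness would not suffice is a nice observation, but the argument itself matches the paper's Appendix C proof step for step.
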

Proposition \ref{prop:convexity} is due to the convexity of the set $\Cy$ and of the least-squares cost. Proposition \ref{prop:smoothness} is a direct consequence of Danskin's Min-Max theorem (Appendix \ref{app:Danskin}) and of the uniqueness of the projection operator. Proposition \ref{prop:lipschitz} is a consequence of the contraction property of projection onto convex sets. See Appendix \ref{app:proofs} for more detailed proofs.

Continuity, convexity and Lipschitz differentiability makes the proposed cost function suitable for a range of optimization algorithm as will be seen in the next section. Moreover, when $f$ is the identity map ($f(\x) = \x$), we have $\Cy = \y$ and $\Lcal(\x,\y) = \frac{1}{2}\|\x-\y\|_2^2$. The proposed cost thus generalizes the least squares cost commonly used in sparse coding and dictionary learning. We will show in Section \ref{sec:Applications} how the proposed cost also generalizes several cost functions proposed in the literature for inpainting, declipping and 1-bit signals, and how it can be applied to quantized measurements. The proposed cost thus provides a unifying framework to tackle all these problems. In the next section we propose simple proximal-based algorithms for sparse coding and dictionary learning using the proposed cost.

\section{Proposed consistent sparse coding and dictionary learning algorithms}
\label{sec:algorithms}

\subsection{Sparse coding algorithms}

For a nonlinear observation $\y$ and a fixed dictionary $\D$, we propose to formulate consistent sparse coding as:
\begin{equation}\label{eq:SC_formulation}
\min_{\bm{\alpha}} \Lcal(\D\bfalpha,\y) + \lambda \Psi(\bfalpha).
\end{equation}
Solving \eqref{eq:SC_formulation} is thus a problem of minimizing the sum of a convex, smooth cost function and a non-smooth regularizer. When the regularizer $\Psi(\cdot)$ is convex, such as the $\ell_1$-norm, this can be classically optimized using \emph{proximal} descent algorithms \cite{2014_Parikh_Proximal, 2011_Combettes_Proximal}. The proposed proximal-based sparse coding algorithm is presented in Algorithm \ref{alg:SC}.
\begin{algorithm}[H]
	\caption{Proposed consistent sparse coding algorithm (fixed parameter $\lambda$)}\label{alg:SC}
	\begin{algorithmic}
		\Require $f, \y, \D, \bfalpha^0, \lambda, \mu_1$
		\State initialize: $\bfalpha \gets \bfalpha^0$
		\While{stopping criterion not reached}
		\State \textbf{Gradient descent step:}
		\State $\bfalpha \gets \bfalpha + \mu_1 \D^T (\Pi_{\Cy}(\D\bfalpha)-\D\bfalpha)$
		\State \textbf{Proximal thresholding:}
		\State $\bfalpha \gets \prox_{\lambda \Psi}(\bfalpha) \triangleq \argmin_{\u}\|\u-\bfalpha\|_2^2 + \lambda \Psi(\bfalpha)$
		\EndWhile
		\State \textbf{return} $\hat{\bfalpha}$
	\end{algorithmic}
\end{algorithm}
Note that Algorithm \ref{alg:SC} is presented here in its simplest form, however it can easily be accelerated using the same strategy as in \cite{2009_Beck_fast}. 
\subsubsection{Convergence}
The convergence properties of Algorithm \ref{alg:SC} can be summarized as follows:
\begin{prop}\label{prop:convergence}
	For a convex penalty $\Psi(\cdot)$ and a step size $0< \mu_1 \leq \frac{1}{\|\D\|_2^2}$ (where $\|\D\|_2$ is the highest singular value of $\D$), Algorithm \ref{alg:SC} converges. 
\end{prop}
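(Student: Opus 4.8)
The plan is to recognize Algorithm~\ref{alg:SC} as the proximal gradient method (a.k.a. forward–backward splitting) applied to the composite objective $F(\bfalpha) \triangleq g(\bfalpha) + \lambda \Psi(\bfalpha)$, where $g(\bfalpha) \triangleq \Lcal(\D\bfalpha, \y)$ is the smooth part and $\lambda\Psi(\cdot)$ is the nonsmooth convex regularizer. With this identification, convergence follows from the standard theory of proximal gradient descent, provided two hypotheses hold: $g$ is convex with Lipschitz-continuous gradient, and the chosen step size $\mu_1$ does not exceed the reciprocal of that Lipschitz constant. The bulk of the argument is therefore to verify these hypotheses for $g$; the convergence conclusion is then a citation to \cite{2014_Parikh_Proximal, 2011_Combettes_Proximal}.

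First I would establish the properties of $g$ by composing the already-proven properties of $\Lcal(\cdot,\y)$ with the linear map $\bfalpha \mapsto \D\bfalpha$. Convexity of $g$ is immediate from Proposition~\ref{prop:convexity}, since the composition of a convex function with a linear map is convex. For smoothness, the chain rule together with Proposition~\ref{prop:smoothness} gives
\begin{equation}
\nabla_{\bfalpha} \, g(\bfalpha) = \D^T \nabla_\x \Lcal(\D\bfalpha, \y) = \D^T\bigl(\D\bfalpha - \Pi_{\Cy}(\D\bfalpha)\bigr),
\end{equation}
which exactly matches the gradient-descent update in Algorithm~\ref{alg:SC} (the update is $\bfalpha \gets \bfalpha - \mu_1 \nabla_{\bfalpha} g(\bfalpha)$).

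Next I would bound the Lipschitz constant of $\nabla_{\bfalpha} g$. Using the expression above and Proposition~\ref{prop:lipschitz}, for any $\bfalpha_1, \bfalpha_2$,
\begin{equation}
\|\nabla g(\bfalpha_1) - \nabla g(\bfalpha_2)\|_2 \leq \|\D\|_2 \, \|\nabla_\x\Lcal(\D\bfalpha_1,\y) - \nabla_\x\Lcal(\D\bfalpha_2,\y)\|_2 \leq \|\D\|_2 \cdot \|\D\bfalpha_1 - \D\bfalpha_2\|_2 \leq \|\D\|_2^2 \, \|\bfalpha_1 - \bfalpha_2\|_2,
\end{equation}
so $\nabla g$ is Lipschitz with constant $L_g = \|\D\|_2^2$. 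The hypothesis $0 < \mu_1 \leq \tfrac{1}{\|\D\|_2^2} = \tfrac{1}{L_g}$ is then precisely the admissible step-size range for proximal gradient descent, and convergence of the iterates (and of the objective values to the minimum) follows from the standard result, which applies since $\Psi$ is convex, closed, and proper and $F$ is bounded below by $0$.

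The main obstacle is not analytic but bookkeeping: one must be careful that the $\prox$ step in the algorithm correctly realizes the backward step for $\lambda\Psi$, and that the Lipschitz chaining through $\D$ is tight enough to justify the exact threshold $1/\|\D\|_2^2$ rather than a looser bound. (One minor caveat worth flagging: the algorithm as written applies $\prox_{\lambda\Psi}$ with the scaling that implicitly assumes a unit coefficient on the quadratic proximity term; matching the step size $\mu_1$ into the proximal operator, i.e. $\prox_{\mu_1\lambda\Psi}$, is the standard convention and I would note this correspondence explicitly.) Once the Lipschitz constant is pinned to $\|\D\|_2^2$, everything else is a direct appeal to the cited convergence theory.
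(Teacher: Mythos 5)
Your proposal is correct and follows essentially the same route as the paper: the paper's proof likewise invokes Propositions \ref{prop:convexity}--\ref{prop:lipschitz} to establish that $\bfalpha \mapsto \Lcal(\D\bfalpha,\y)$ is convex with $\|\D\|_2^2$-Lipschitz gradient and then appeals to the standard convergence theory for proximal gradient methods; you have merely written out the chain-rule and Lipschitz-chaining details that the paper leaves implicit. Your side remark about the proximal step properly being $\prox_{\mu_1\lambda\Psi}$ is a fair observation about the algorithm's notation, not a gap in the argument.
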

\begin{proof}
	This is a direct consequence of Propositions \ref{prop:convexity} - \ref{prop:lipschitz} and classical results on proximals algorithms with convex and Lipschitz differentiable functions (see, \cite{2011_Combettes_Proximal,2014_Parikh_Proximal} or \cite[Theorem 3.1.]{2009_Beck_fast} for a proof). Here $\|\D\|_2^2$ is the Lipschitz constant of $\bfalpha \mapsto \Lcal(\D\bfalpha,\y)$, as a consequence of Proposition \ref{prop:lipschitz}.
\end{proof}

\subsubsection{Measurement consistency}

The regularization parameter $\lambda >0$ in \eqref{eq:SC_formulation} controls a tradeoff between sparsity and consistency. There is however no analytical formula for the relationship between the parameter $\lambda$, and how ``consistent" the resulting signal is (i.e., how close it is from its consistency set). In particular the resulting signal is not guaranteed to be \emph{exactly} consistent, i.e. exactly within its consistency set. One way to circumvent this is to solve \eqref{eq:SC_formulation} iteratively for different values of the parameter $\lambda$, starting from a large value, and progressively lowering $\lambda$ until a desired consistency is achieved. This can be done efficiently using a warm-start strategy, i.e. initializing each iteration by the estimate of the previous iteration. Algorithm \ref{alg:SC_paramfree} is a simple modification of Algorithm \ref{alg:SC} that implements this strategy ($\{\lambda^k\}_{k\geq0}$ is a series of non-increasing and strictly positive values).
\begin{algorithm}[H]
	\caption{Proposed consistent sparse coding algorithm (adaptive parameter $\lambda$)}\label{alg:SC_paramfree}
	\begin{algorithmic}
		\Require $f, \y, \D, \bfalpha^0, \{\lambda^k\}_{k\geq0}, \mu_1, \epsilon$
		\State initialize: $\bfalpha \gets \bfalpha^0, k \gets 0, \lambda \gets \lambda^0$
		\While{$\Lcal(\D\bfalpha^k,\y) > \epsilon$}
		\State \textbf{Iterate until convergence:}
		\State $\bfalpha \gets \prox_{\lambda \Psi}(\bfalpha + \mu_1 \D^T (\Pi_{\Cy}(\D\bfalpha)-\D\bfalpha))$
		\State \textbf{Update $\lambda$:}
		\State $\lambda \gets \lambda^{k+1}, \ k \gets k+1$

		\EndWhile
		\State \textbf{return} $\hat{\bfalpha}$
	\end{algorithmic}
\end{algorithm}
Algorithm \ref{alg:SC_paramfree} is similar to \emph{homotopy} methods proposed for sparse coding \cite{2005_Malioutov_Homotopy,2004_Efron_Least}. By progressively decreasing $\lambda$, the algorithm essentially adds more coefficients to the support set, making the signal less sparse but more consistent. We furthermore have the following: 
\begin{prop}\label{prop:paramfree}
	Let $F(\lambda, \bfalpha) \triangleq \Lcal(\D\bfalpha,\y) + \lambda \Psi(\bfalpha)$, $\lambda^0 > \dotsb >\lambda^k > \dotsb > 0$ and $\bfalpha^k \triangleq \argmin_{\bfalpha} F(\lambda^k, \bfalpha)$. We have:

	$\forall k$, 
	$\Psi(\alpha^{k+1}) \geq \Psi(\alpha^{k})$ and $\Lcal(\D\bfalpha^{k+1},\y) \leq \Lcal(\D\bfalpha^k,\y)$.
	
	Moreover, if $\{\lambda_k\} \rightarrow 0$, then the sequence $\{F(\lambda^{k},\bfalpha^{k})\}_{k\geq 0}$ converges to an optimum of the constrained problem:
	\begin{equation}\label{eq:constrained}
	\min_{\bfalpha} \Psi(\bfalpha) \quad \text{s.t.} \quad \D\bfalpha \in \Cy.
	\end{equation}
\end{prop}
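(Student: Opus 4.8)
The plan is to prove the two monotonicity claims by a standard \emph{exchange} argument between consecutive sub-problems, then to obtain convergence of the scalar sequence from monotonicity and boundedness, and finally to identify the limiting behaviour by squeezing the data-fidelity and regularization terms separately. Throughout I write $\Lcal_k \triangleq \Lcal(\D\bfalpha^k,\y)$ and use that, since $\Cy$ is closed, $\Lcal(\D\bfalpha,\y) = \dEU(\D\bfalpha,\Cy) = 0$ if and only if $\D\bfalpha \in \Cy$.

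First I would prove monotonicity. Writing the optimality of $\bfalpha^k$ for $F(\lambda^k,\cdot)$ and of $\bfalpha^{k+1}$ for $F(\lambda^{k+1},\cdot)$ gives two inequalities; adding them cancels the $\Lcal$ terms and leaves $(\lambda^k-\lambda^{k+1})\big(\Psi(\bfalpha^{k+1})-\Psi(\bfalpha^k)\big)\ge 0$. Since $\lambda^k>\lambda^{k+1}$, this yields $\Psi(\bfalpha^{k+1})\ge\Psi(\bfalpha^k)$. Substituting this back into the optimality of $\bfalpha^{k+1}$ for $F(\lambda^{k+1},\cdot)$ gives $\Lcal_{k+1}-\Lcal_k \le \lambda^{k+1}\big(\Psi(\bfalpha^k)-\Psi(\bfalpha^{k+1})\big)\le 0$, which is the second claim.

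Next I would establish convergence of $\{F(\lambda^k,\bfalpha^k)\}$. The chain $F(\lambda^{k+1},\bfalpha^{k+1})\le F(\lambda^{k+1},\bfalpha^k)\le F(\lambda^k,\bfalpha^k)$ holds, where the first inequality is optimality of $\bfalpha^{k+1}$ and the second uses $\lambda^{k+1}<\lambda^k$ together with $\Psi\ge 0$; hence the sequence is non-increasing and bounded below by $0$, so it converges. To locate the limit I would compare against any feasible point $\bfalpha^{\star}$ of the constrained problem \eqref{eq:constrained}, for which $\D\bfalpha^{\star}\in\Cy$ and thus $\Lcal(\D\bfalpha^{\star},\y)=0$. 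Optimality of $\bfalpha^k$ then gives $F(\lambda^k,\bfalpha^k)\le\lambda^k\Psi(\bfalpha^{\star})$, so letting $\lambda^k\to 0$ forces $F(\lambda^k,\bfalpha^k)\to 0$; reading off the two non-negative terms separately yields $\Lcal_k\to 0$ (asymptotic consistency) and, after dividing by $\lambda^k>0$, the uniform bound $\Psi(\bfalpha^k)\le\Psi(\bfalpha^{\star})$ for every $k$.

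Finally I would show the iterates themselves approach a minimizer of \eqref{eq:constrained}. Taking $\bfalpha^{\star}$ to be optimal for the constrained problem, the bound $\Psi(\bfalpha^k)\le\Psi(\bfalpha^{\star})$ together with coercivity of $\Psi$ (e.g. the $\ell_1$-norm) confines $\{\bfalpha^k\}$ to a bounded set, so it admits a convergent subsequence $\bfalpha^{k_j}\to\bfalpha^{\infty}$. Continuity of the map $\bfalpha\mapsto\Lcal(\D\bfalpha,\y)$ with $\Lcal_k\to 0$ gives $\Lcal(\D\bfalpha^{\infty},\y)=0$, i.e. $\bfalpha^{\infty}$ is feasible; lower semicontinuity of $\Psi$ with the bound above gives $\Psi(\bfalpha^{\infty})\le\Psi(\bfalpha^{\star})$, while feasibility forces $\Psi(\bfalpha^{\infty})\ge\Psi(\bfalpha^{\star})$, so $\bfalpha^{\infty}$ attains the constrained minimum. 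The hard part will be exactly this last step: one must secure boundedness of $\{\bfalpha^k\}$ from coercivity of the regularizer and pass to the limit through a penalty that is in general only lower semicontinuous, and one must also be explicit about what ``the limit of $F(\lambda^k,\bfalpha^k)$'' designates, since the \emph{scalar values} tend to $0$ whereas it is the \emph{minimizers} that converge to an optimal point of \eqref{eq:constrained}.
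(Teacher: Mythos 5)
Your proof is correct and, on the parts the paper actually proves, it follows the same route: the identical exchange of the two optimality inequalities to get $(\lambda^{k+1}-\lambda^{k})(\Psi(\bfalpha^{k+1})-\Psi(\bfalpha^{k}))\le 0$ and hence both monotonicity claims, and the identical comparison with a feasible point to get $0\le \Lcal(\D\bfalpha^{k},\y)+\lambda^{k}\Psi(\bfalpha^{k})\le\lambda^{k}\Psi^{*}$, whence $\Lcal(\D\bfalpha^{k},\y)\to 0$ and $\Psi(\bfalpha^{k})\le\Psi^{*}$. Where you genuinely go beyond the paper is the final step. The paper stops at these two facts and then asserts that $F(\lambda^{k},\bfalpha^{k})$ ``converges to the optimum value $\Psi^{*}$''; as you correctly point out, the squeeze actually forces the scalar sequence $F(\lambda^{k},\bfalpha^{k})$ to $0$, so the substantive content of the proposition must be read as a statement about the minimizers, not the objective values. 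Your subsequential argument --- coercivity of $\Psi$ to bound $\{\bfalpha^{k}\}$, continuity of $\bfalpha\mapsto\Lcal(\D\bfalpha,\y)$ (Propositions 1--3 of the paper) to get feasibility of any limit point, and lower semicontinuity of $\Psi$ to get $\Psi(\bfalpha^{\infty})=\Psi^{*}$ --- is precisely the missing piece that substantiates this reading; the paper instead defers it to the cited penalty-method results of Bertsekas. The only caveat is that your closing step uses hypotheses the proposition does not state (nonnegativity, coercivity and lower semicontinuity of $\Psi$); these hold for the $\ell_1$ norm used in Algorithm 2, but coercivity fails for the $\ell_0$ penalty, so it is worth recording them as explicit assumptions rather than leaving them implicit.
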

Proposition \ref{prop:paramfree} and its proof (in Appendix \ref{app:proofs}) are inspired by results on penalty methods for constrained optimization problems \cite{1999_Bertsekas_Nonlinear}. Algorithm \ref{alg:SC_paramfree} can in fact be seen as an unconstrained penalty method to solve the constrained problem \eqref{eq:constrained}. Proposition \ref{prop:paramfree} shows that Algorithm \ref{alg:SC_paramfree} leads to asymptotically consistent solutions. 


\subsubsection{Influence of noise}

In the case when additive noise is present, the signal is measured as:
\begin{equation}
\y = f(\D\bfalpha^* + \n),
\end{equation}
i.e. $\D\bfalpha^* + \n \in f^{-1}(\y)$, and the original signal $\D\bfalpha^*$ is no longer necessarily in the pre-image set of the observation $\y$. For this reason, finding a solution that is both sparse and consistent with the measurement $\y$ might be infeasible. We can however show that if the noise level is small, the cost function does not deviate far from zero, since:
\begin{equation}
\begin{split}
\Lcal(\D\bfalpha^*, \y) &= \min_{\z \in \Cy} \|\D\bfalpha^* - \z\|_2^2\\
& \leq \min_{\z \in \Cy} \|\D\bfalpha^* +\n - \z\|_2^2 + \|\n\|_2^2\\ 
& = \|\n\|_2^2.
\end{split}
\end{equation}
In other words, if the noise level $\|\n\|_2^2$ is small, then the original signal $\D\bfalpha^*$ is \emph{approximately} consistent with the measurement $\y$, i.e. $\Lcal(\D\bfalpha^*, \y)$ is small. A good reconstruction strategy is then to solve:
\begin{equation}
\argmin_{\bfalpha} \Psi(\bfalpha) \quad \text{s.t.} \quad \Lcal(\D\bfalpha, \y) \leq \|\n\|_2^2,
\end{equation}
which in its Lagrangian form is equivalent to \ref{eq:SC_formulation}.

\subsection{Dictionary learning algorithm}

For a collection $\{\y_t\}_{1,\dotsc,T}$ of $T$ signals measured through the same measurement function $f$, consistent dictionary learning can be formulated using the proposed cost as:
\begin{equation}\label{eq:DL_formulation}
\min_{\D \in \mathcal{D}, \bm{\alpha}_t} \sum_{t=1}^{T} \Big(\Lcal(\D\bfalpha_t,\y_t) + \lambda \Psi(\bfalpha_t)\Big)\\
\end{equation}
Jointly minimizing $\D$ and $\{\bfalpha_t\}_{t=1,...,T}$ in \eqref{eq:DL_formulation} is a non-convex problem. Dictionary learning algorithms typically alternate between a sparse coding, and a dictionary update step \cite{2014_Mairal_Sparse}. The sparse coding step (with a fixed dictionary) can be solved using the proposed consistent sparse coding algorithm (Algorithm \ref{alg:SC}). Once the sparse codes $\{\bfalpha_t\}_{1,\dotsc,T}$ have been updated, the dictionary update step can be formulated as:
\begin{equation}\label{eq:dict_cost}
\min_{\D \in \mathcal{D}} \sum_{t=1}^{T}\Lcal(\D\bfalpha_t,\y_t),
\end{equation}
which can be solved using projected gradient descent \cite{1999_Bertsekas_Nonlinear}, i.e. alternating between a gradient descent step, and a projection step $\Pi_{\mathcal{D}}$ which here simply re-normalizes each column $\d_i$ of $\D$ as $\d_i \gets \d_i/\max(\|\d_i\|_2,1)$.
The proposed dictionary update step is thus similar to classical projected gradient descent approaches already proposed for dictionary learning \cite{1997_Olshausen_Sparse,2008_Aharon_Sparse,2008_Kavukcuoglu_Fast,2010_Mairal_Online}. The proposed dictionary learning algorithm is presented in Algorithm \ref{alg:DL}. The parameter $\mu_2$ is a step size which can be set as $\mu_2 = 1/L_2$ where $L_2 = \|\A\|_2^2$ ($\A \triangleq \left[\bfalpha_1, ..., \bfalpha_T\right]$) is the Lipschitz constant of the cost in \eqref{eq:dict_cost}.

\begin{algorithm}[H]
	\caption{Proposed consistent dictionary learning algorithm}\label{alg:DL}
	\begin{algorithmic}
		\Require $f, \{\y_t\}_{1...T}$, $\D^0$, $\{\bfalpha_t^{0}\}_{1,\dotsc,T}, \lambda$
		\State initialize: $\D^{(0)} \gets \D^0$, $\bfalpha_t^{(0)} \gets \bfalpha_t^{0}$, $i \gets 0$
		\While{stopping criterion not reached}
		\State $i \gets i+1$
		\State \textbf{Sparse coding step:}
		\For{$t = 1 ... T$}
		\State Initialize $\bfalpha_t \gets \bfalpha_t^{(i-1)}$.
			\State Update $\bfalpha_t^{(i)}$ using Algorithm \ref{alg:SC} with $\D = \D^{(i-1)}$.
		\EndFor
		\State \textbf{Dictionary update step:} 
		\State Initialize $\D \gets \D^{(i-1)}$
			\While{not converged}
			\State $\D \gets \D + \mu_2 \sum_{t} (\Pi_{\Cyt}(\D\bfalpha_t^{(i)})- \D\bfalpha_t^{(i)})\bfalpha_t^{(i)T}$
			\State $\D \gets \Pi_{\mathcal{D}}(\D)$
			\EndWhile
		\State $\D^{(i)} \gets \D$
		\EndWhile
		\State \textbf{return} $\hat{\D}, \{\hat{\bfalpha}_t\}_{1...T}$
	\end{algorithmic}
\end{algorithm}

\subsection{Discussions: extensions to non-convex sets and other distance metrics}

When the pre-image set $\Cy$ is non-convex, the proposed cost is no longer convex. Furthermore, the projection in \eqref{eq:projection} is no longer necessarily unique, and as a consequence (following Theorem \ref{th:danskin}), the proposed cost in no longer differentiable. The algorithms proposed in this section could be extended to non-convex sets, however convergence to a global optimum would not be guaranteed, even for $\ell_1$-based sparse coding algorithms. For this reason, we focus in this paper on measurement functions with convex pre-image sets (which is the case for many measurement functions).

The proposed optimization problem \eqref{eq:SC_formulation} can be seen as a penalty method to solve the constrained problem \eqref{eq:constrained}. Here, the proposed cost used with an $\ell_2$-distance in \eqref{eq:distance} penalizes the samples outside of their consistency set with a quadratic penalty, while the samples inside the set have a cost of zero. However other types of distances have been used for penalty methods. An $\ell_1$-norm enforces a softer penalty on larger values, and can be used when large outliers are present. This has been proposed in the context of quantized measurements in \cite{2013_Jacques_Robust} and \cite{2016_Valsesia_Universal}. However, an $\ell_1$-norm leads to a non-differentiable data-fidelity term. Subgradient methods can be derived, however they are known to have a slow convergence rate \cite{2011_Bach_Convex}, and require careful tuning of the gradient descent parameter at every iteration. Other penalty functions include the Huber loss, which is differentiable and robust to outliers, but requires tuning of an additional parameter beforehand. Log-barrier functions can also be used to solve constrained problems in an unconstrained way, by forcing the estimates to remain interior to the set and away from the boundary \cite{1999_Bertsekas_Nonlinear}. Log-barriers function however tend to favour solutions that are far away from the boundary, which is not desirable in the context of consistent signal recovery. Finally, other approaches in the literature propose a cost based a maximum-likelihood estimation of the signal, assuming Gaussian additive noise \cite{2010_Zymnis_Compressed,2013_Bahmani_Robust,2014_Davenport_1}. However, they often involve computing cumulative distribution functions, and therefore lead to more complex formulations than the simple Euclidean distance proposed here. We focus here on the $\ell_2$-based data fidelity cost, since it leads to a simpler formulation and algorithms, and naturally extends the linear least-squares used in sparse coding and dictionary learning.

\section{Applications and link with previous work}
\label{sec:Applications}

In this section we show how the proposed framework can be applied to linear inverse problems such as denoising and inpainting, and nonlinear inverse problems such as declipping, de-quantization and 1-bit recovery. We give explicit formulations for the proposed cost and show links with costs proposed in the literature.

\subsection{Linear measurements:}

In the linear case $\y = \M\x$, the projection operator can be written as:
\begin{equation}\label{eq:linear_proj}
\Proj(\x) = \argmin_{\z} \|\x-\z\|_2^2 \quad \text{s.t.} \quad \M\z = \y,
\end{equation}
which (when $\M\M^T$ is invertible) can be computed as:
\begin{equation}
\Proj(\x) = \x - \M^T(\M\M^T)^{-1}(\M\x - \y).
\end{equation}
This shows that our proposed cost can be computed as:
\begin{equation}\label{eq:cost_linear}
\Lcal(\x,\y) = \frac{1}{2}\|\M^T(\M\M^T)^{-1}(\M\x-\y)\|_2^2,
\end{equation}
where we recognise the right pseudo-inverse $\M^T(\M\M^T)^{-1}$ of $\M$. The cost \eqref{eq:cost_linear} can thus be seen as ``inverting" the linear measurements and computing the error in the input space. In practice however computing the pseudo-inverse might be expensive or not feasible. In simple cases however, we retrieve classical costs used in the literature. When the measurements are clean or subject to additive Gaussian noise, $\M = \I$ and we have:
\begin{equation}
\Lcal(\x,\y) = \frac{1}{2}\|\x-\y\|_2^2,
\end{equation}
which is the classical linear least squares commonly used in sparse coding and dictionary learning. When $\M$ is a diagonal binary matrix (in the inpainting case), the projection \eqref{eq:linear_proj} can be computed as:
\begin{equation}
\Proj(\x) = \y + (\I - \M)\x,
\end{equation}  
and:
\begin{equation}
\Lcal(\x,\y) = \frac{1}{2}\|\y - \M\x\|_2^2,
\end{equation}
which is the masked least squares commonly used for signal inpainting \cite{2012_Adler_Audio}. This shows in particular that the algorithms proposed in Section \ref{sec:algorithms} extend classical algorithms such as ISTA, IHT or gradient-descent based dictionary learning algorithms.

\subsection{Saturated/clipped measurements}

In the case of saturated signals, using the notations of Section \ref{sec:background}, the feasibility set can be defined in closed form as:
\begin{align}
\begin{split}
\Cy = \{\x| \Mr& \y=\Mr \x, \Mcp \x \succeq \Mcp \y, \\
& \Mcm \x \preceq \Mcm \y\}
\end{split}
\end{align}
which is a convex set. The projection can be computed as:
\begin{align}
\begin{split}
\Pi_{\Cy}(\D\bfalpha) = \Mr \y + & \Mcp \max(\y,\D\bfalpha) \\
& + \Mcm \min(\y,\D\bfalpha).
\end{split}
\end{align}
This shows that the proposed cost can be written in closed form as:
\begin{equation}\label{eq:clipping_consistency}
\begin{split}
& \Lcal(\D\bfalpha,\y) = \frac{1}{2} \big[\|\Mr (\y-\D\bfalpha)\|_2^2 \\& + \|\Mcp (\y-\D\bfalpha)_+\|_2^2 +  \|\Mcm (\y-\D\bfalpha)_-\|_2^2\big].
\end{split}
\end{equation}
The proposed cost thus generalizes the soft consistency metric proposed in \cite{2011_Laska_Democracy,2013_Kitic_Consistent,2014_Siedenburg_Audio} for declipping. When $\Psi(\bfalpha) = \|\bfalpha\|_0$, Algorithm \ref{alg:SC} is thus equivalent to the consistent IHT declipping algorithm proposed in \cite{2013_Kitic_Consistent}. When $\Psi(\bfalpha) = \|\bfalpha\|_1$, Algorithm \ref{alg:SC} is similar to the ISTA-type declipping algorithms proposed in \cite{2014_Siedenburg_Audio}.

\subsection{Quantized measurements}
We consider a general quantizer defined by quantization levels $y_i$ and quantization sets $f^{-1}\{y_i\} = [l_i, u_i)$ for each sample $i$. As commented earlier and discussed in \cite{2009_Dai_Distortion,2010_Zymnis_Compressed}, one can assume $f^{-1}\{y_i\} = [l_i, u_i]$ (the closure of $[l_i, u_i)$) without affecting the cost function. The projection operator for each sample $x_i$ can be computed as:
\begin{align}
\Proj(x_i) = 
\begin{cases}
u_i & \text{if} \quad x_i \geq u_i\\
l_i & \text{if} \quad x_i \leq l_i\\
x_i & \text{otherwise.}
\end{cases}
\end{align}
When concatenating the quantization boundaries $\l = [l_1,...,l_N]$ and $\u = [u_1,...,u_N]$, the proposed cost can be written as:
\begin{equation}\label{eq:cost_quantization}
\Lcal(\D\bfalpha,\y) = \frac{1}{2} \big[\|(\l-\D\bfalpha)_-\|_2^2+\|(\u-\D\bfalpha)_+\|_2^2\big].
\end{equation}
For example in the case of uniform mid-riser quantizer:
\begin{equation}
\Lcal(\D\bfalpha,\y) = \frac{1}{2}\Big[\|(\y-\frac{\Delta}{2}-\D\bfalpha)_-\|_2^2+\|(\y+\frac{\Delta}{2}-\D\bfalpha)_+\|_2^2\Big].
\end{equation}
This cost is somewhat similar to the optimization problem in \cite{2011_Dai_Information} solved at every iteration. However in \cite{2011_Dai_Information}, both the signal and its projection are optimized, solving a quadratic programming problem at every iteration. Here, only the sparse coefficients need to be optimized using simple gradient descent. Interestingly, the authors in \cite{2011_Dai_Information} also assume convexity of the quantization sets, in order to ensure the solutions to be well-defined and unique. The idea of projecting onto convex sets for image decoding was also used in \cite{1998_Thao_Set}.

\subsection{1-bit sensing}

In the case of 1-bit measurements, the projection operator can easily be computed for each sample $x_i$ as:

\begin{align}
\Proj(x_i) = 
\begin{cases}
x_i & \text{if} \quad \sign(x_i) = y_i\\
0 & \text{otherwise,}
\end{cases}
\end{align}
and it can be easily verified that:
\begin{equation}
\Lcal(\D\bfalpha,\y) = \frac{1}{2}\|\big(\y\odot(\D\bfalpha)\big)_-\|_2^2,
\end{equation}
which shows that the proposed cost is equivalent to the cost \eqref{eq:1bit_l2} proposed for 1-bit signals \cite{2008_Boufounos_1,2009_Boufounos_Greedy}. We can verify that this indeed corresponds to the clipping consistency cost, since when $\theta = 0^+$ we have in \eqref{eq:clipping_consistency}:
\begin{equation}
\begin{split}
\Lcal(\D\bfalpha,\y) & = \frac{1}{2}\big[\|\Mcp (\0^+-\D\bfalpha)_+\|_2^2\\
& \qquad +  \|\Mcm (\0^--\D\bfalpha)_-\|_2^2\big]\\
& = \frac{1}{2}\|\big(\sign(\y)\odot(\D\bfalpha)\big)_-\|_2^2
\end{split}
\end{equation}
Similarly for quantization, taking $l_i = 0$, $u_i \rightarrow +\infty$ if $y_i>0$, $l_i \rightarrow -\infty, u_i = 0$ if $y_i<0$ in \eqref{eq:cost_quantization}, gives the same result.

\subsection{Summary}

The proposed framework unifies cost functions used for denoising, inpainting, declipping and 1-bit recovery. It can also be used for quantization with any quantization map. In particular, performing sparse coding or dictionary learning on clipped, quantized or 1-bit measurements can be done as simply and computationally efficiently as with clean, linear measurement. Experimental results are shown in the next section.

\section{Evaluation}
\label{sec:evaluation}

We evaluate the performance of the proposed framework on declipping, de-quantization and 1-bit recovery tasks. Each algorithm is evaluated in terms of Signal-to-Noise Ratio (SNR) signal: $\text{SNR}(\hat{\x},\x) = 20 \log\frac{\|\x\|_2}{\|\x-\hat{\x}\|_2}$ where $\hat{\x} = \hat{\D}\hat{\bfalpha}$ is the estimated signal, and $\x$ is the reference clean signal. However in the case of 1-bit measurements, since the signal can only be recovered up to an amplitude factor, we use the \emph{angular} SNR \cite{2013_Jacques_Quantized}:
\begin{equation}
\text{SNR}_{\text{angular}}(\hat{\x},\x) \triangleq 20 \log\frac{\|\x\|_2}{\|\x-\frac{\|\x\|_2}{\|\hat{\x}\|_2}\hat{\x}\|_2}.
\end{equation}

\subsection{Consistent sparse coding}

We first evaluate the performance of the proposed consistent sparse coding algorithms. We generate a dictionary $\D \in \mathbb{R}^{32 \times 64}$, with i.i.d. normally distributed entries, and unit $\ell_2$-norm columns. We then generate $T = 2000$ $K$-sparse coefficients $\bfalpha_t \in \mathbb{R}^{64}$ with i.i.d normal distribution for the coefficients. We normalize the resulting signals $\x_t = \D \bfalpha_t$ to unit $\ell_{\infty}$ norm, and artificially clip or quantize the signals as $\y_t = f(\x_t)$. We consider clipping with different levels $\theta$. For quantization, we consider a uniform mid-rise quantizer that quantizes the input space $[-1,1]$ using $N_b$ bits, i.e. using $2^{N_b}$ quantization levels of size $\Delta = 2/2^{N_b}$.

Figure \ref{fig:proposed_vs_ADMM} shows the performance of the proposed algorithms, implemented with $\Psi(\bfalpha) = \|\bfalpha\|_1$, $\lambda = 10^{-2}$ (for Algorithm \ref{alg:SC}), and the adaptive parameter strategy (Algorithm \ref{alg:SC_paramfree}). For the adaptive strategy, we used $\lambda^0 = \|\nabla\Lcal(\D\bfalpha,\y)\big\rvert_{\bfalpha = \0}\|_{\infty} = \|\D^T\Pi_{\Cy}(\0)\|_{\infty}$, and $\lambda^{k+1} = \lambda^k/2$. The algorithm is stopped when a consistency level $\Lcal(\D\bfalpha,\y) \leq \epsilon$ with $\epsilon = 10^{-3}$ is reached. We also compare with the ADMM-based algorithm that solves the exact constraint problem \eqref{eq:constrained}, as representative of the state-of-the-art. This was recently used for de-quantization in \cite{2016_Moshtaghpour_Consistent}, and declipping in \cite{2015_Kitic_Sparsity}, although here we use the $\ell_1$-norm unlike the $\ell_0$-norm used in \cite{2015_Kitic_Sparsity}\footnote{An \emph{analysis} sparsity formulation was also proposed in \cite{2015_Kitic_Sparsity}, but we refer here to the synthesis version.}. The ADMM algorithm is limited to 400 iterations due its computational complexity. Other algorithms are run for a maximum of 400 iterations for a fair comparison. Figure \ref{fig:proposed_vs_ADMM} shows that the proposed algorithm with a carefully chosen fixed parameter $\lambda$ can outperform the exact constrained-based algorithm. This suggests that solutions that are slightly less consistent but sparser, might lead to better reconstruction. However fixing the parameter $\lambda$ might not be optimal, since in practice the best parameter might depend on the sparsity level, distortion level and/or signal energy. We can see in Figure \ref{fig:proposed_vs_ADMM} that using Algorithm \ref{alg:SC_paramfree} and the above described sequence of $\lambda^k$ performs better than using a fixed parameter $\lambda$. In particular the solution reached by the proposed adaptive algorithm performs better than the constrained approach. Another important consideration is the computational time. The ADMM-based algorithm involves computing non-orthogonal projections at each iteration, which have to be computed iteratively, resulting in a high overall computational cost. The proposed algorithms \ref{alg:SC} and \ref{alg:SC_paramfree} on the other hand only require gradient computations and element-wise operations, and are thus computationally efficient. The average computational time of each algorithm is shown in Table \ref{table:comp_time}.

\makeatletter%
\if@twocolumn%
\begin{figure}[h!]
	\centering
	\begin{subfigure}[h!]{0.48\columnwidth}
		\includegraphics[width=1\textwidth]{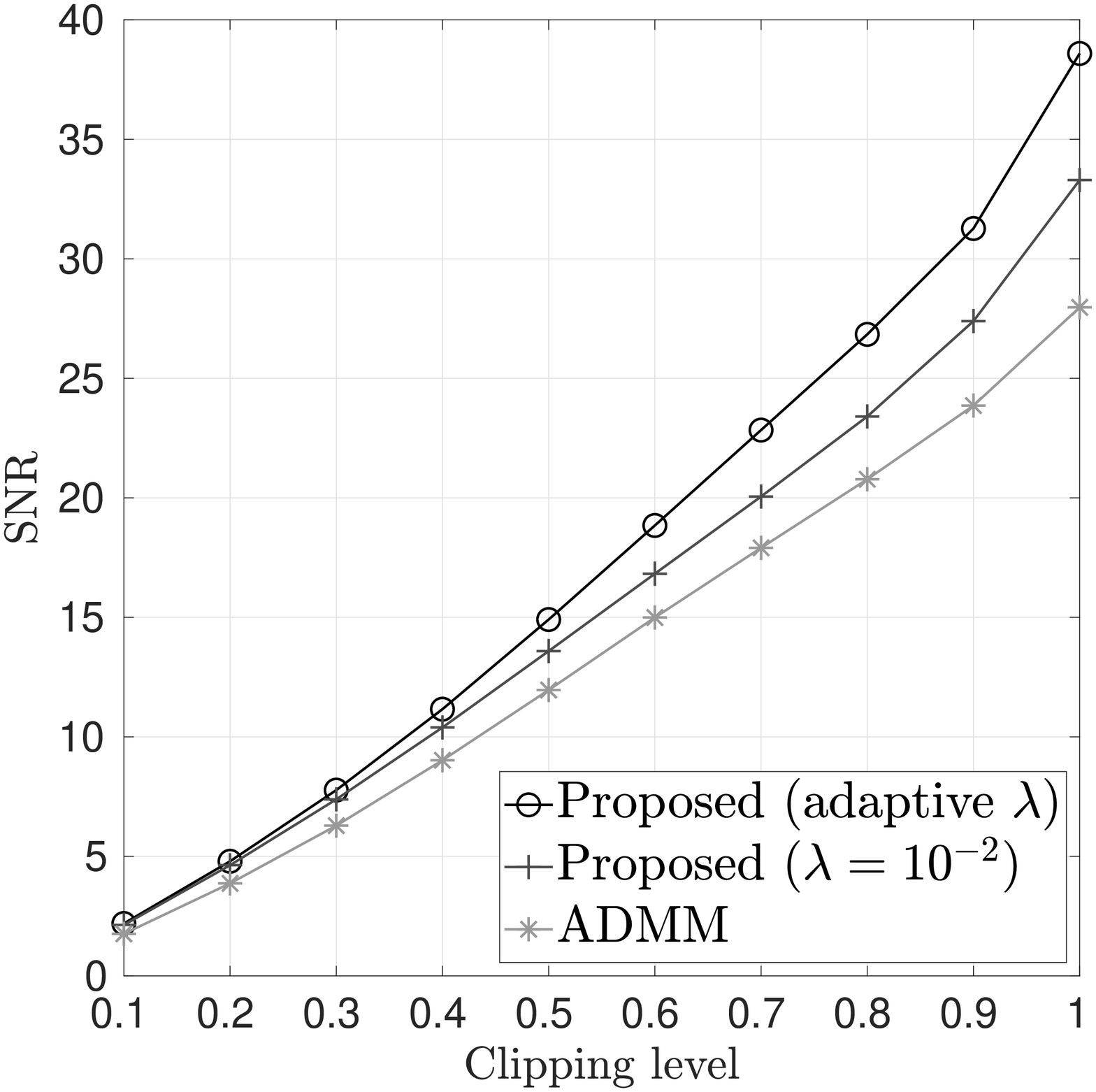}
		\caption{Clipping}
	\end{subfigure}
	~
	\begin{subfigure}[h!]{0.48\columnwidth}
		\includegraphics[width=\textwidth]{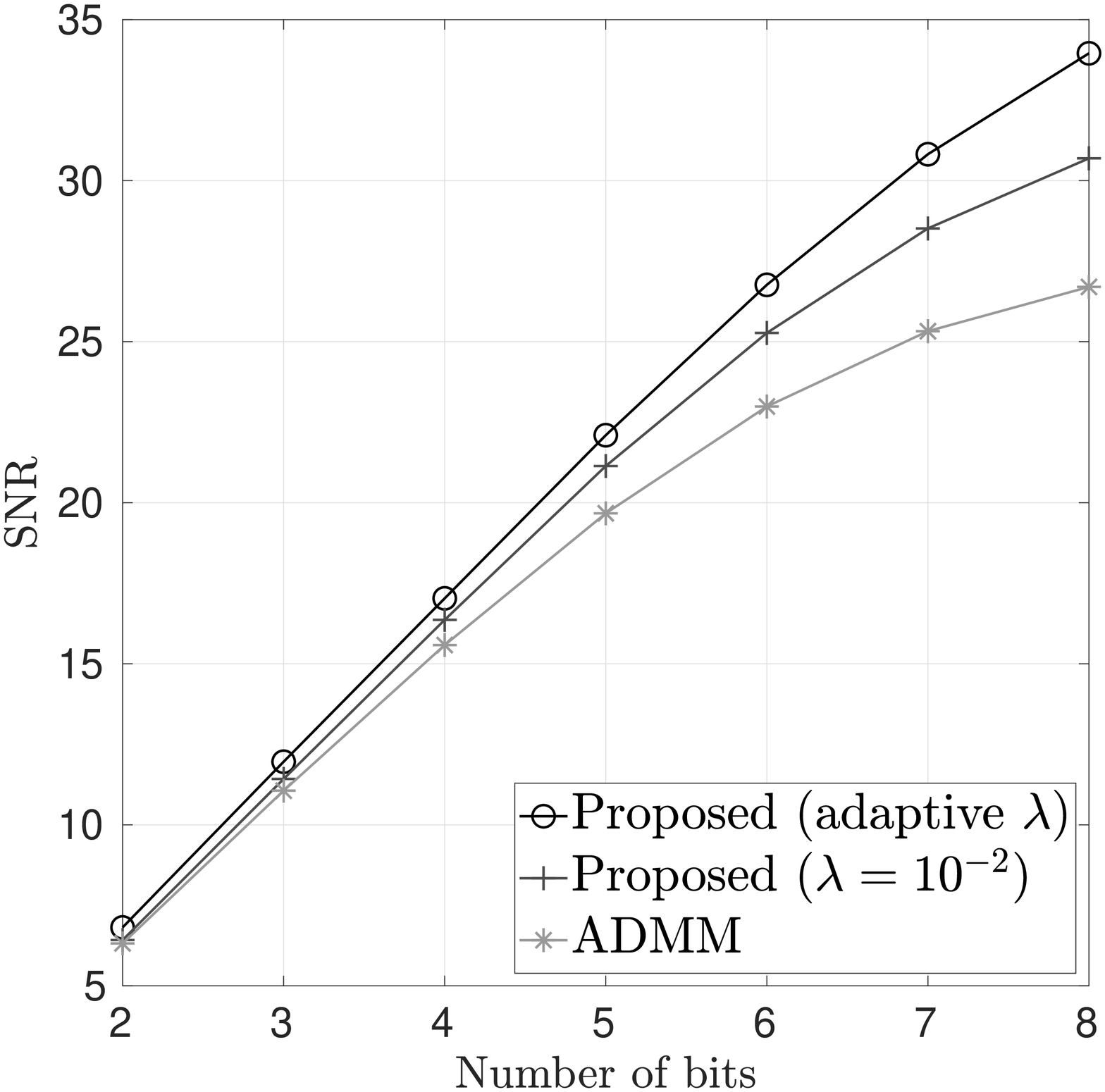}
		\caption{Quantization}
	\end{subfigure}
	\caption{Comparison of the proposed consistent sparse coding algorithms using the $\ell_1$-norm, versus solving the constrained problem with ADMM.}
	\label{fig:proposed_vs_ADMM}
\end{figure}
\else
\begin{figure}[h!]
	\centering
	\begin{subfigure}[h!]{0.4\columnwidth}
		\includegraphics[width=1\textwidth]{fig/Results_adaptive_ISTA_declipping.eps}
		\caption{Clipping}
	\end{subfigure}
	~
	\begin{subfigure}[h!]{0.4\columnwidth}
		\includegraphics[width=\textwidth]{fig/Results_adaptive_ISTA_dequantization.eps}
		\caption{Quantization}
	\end{subfigure}
	\caption{Comparison of the proposed consistent sparse coding algorithms using the $\ell_1$-norm, versus solving the constrained problem with ADMM.}
	\label{fig:proposed_vs_ADMM}
\end{figure}
\fi
\makeatother

\begin{table}[h!]
	\centering
	\begin{tabular}{|c||c|c|c|}
		\hline
		cpu time (s) & Alg. \ref{alg:SC_paramfree} & Alg. \ref{alg:SC} & ADMM\\
		\hline\hline
		declipping & \textbf{1.08} & 2.75 & 281.7\\
		\hline
		dequantization & \textbf{1.19} & 2.91 & 320.3\\
		\hline
	\end{tabular}
	\caption{Average computational time of each algorithm}
	\label{table:comp_time}
\end{table}

\subsection{Consistent dictionary learning}


\begin{table*}[!t]
\renewcommand{\arraystretch}{1.3}
\caption{Performance of consistent dictionary learning on 1-bit recovery}
\label{table:consDL}
\centering
\begin{tabular}{ |c||c|c|c| c|}
 \hline
 Angular SNR (dB) & Classical & Classical &Consistent sparse coding (DCT)& Consistent dictionary learning\\
& sparse coding (DCT) &  dictionary learning & (Algorithm \ref{alg:SC}) & (Algorithm \ref{alg:DL}) \\
 \hline\hline
 Female speech & 5.69 & 5.50 & 5.91 & \textbf{6.12}\\
  \hline
 Male speech & 4.67 & 4.47 & 4.50 & \textbf{4.70}\\
  \hline
\end{tabular}
\end{table*}

We evaluate the proposed consistent dictionary learning algorithm on real speech signals. The dataset consists of 10 male and female speech signals, taken from the SISEC dataset \cite{2017_Liutkus_2016}. Each signal is 10s long, sampled at 16kHz, with 16 bits per sample. Each signal is normalized to unit $\ell_{\infty}$ norm, and then processed using overlapping time frames of size $N = 256$, with rectangular windows and 75\% overlap, for a total of approximately $T = 2500$ frames per signal. All sparse coding experiments are run using an overcomplete  DCT dictionary of size $M = 512$. All dictionary learning algorithms are initialized using the same DCT dictionary. To speed up convergence, the sparse coefficients and dictionaries are initialized at every iteration using the estimates from the previous iteration. Similarly, the step size parameters $\mu_1$ and $\mu_2$ can be re-estimated at every iteration as $\mu_1 = 1/\|\D\|_2^2$ and $\mu_2 = 1/\|\A\|_2^2$ using the current estimates of $\D$ and $\A$.

Figure \ref{fig:exp_audio} shows the performance of the proposed framework for consistent sparse coding and dictionary learning, compared to classical linear sparse coding and dictionary learning. In the case of declipping, the classical approach is to discard the clipped samples and treat declipping as an inpainting problem \cite{2012_Adler_Audio}. In the quantization case, the classical approach is to treat the quantized signals as noisy signals with variance $\frac{\Delta^2}{12}$ \cite{2009_Boufounos_Greedy}, which is enforced by stopping the algorithm when an error $\epsilon = \frac{\Delta^2}{12}$ is reached. For 1-bit signals, we simply use the sign measurements directly as the input \cite{2009_Boufounos_Greedy}. All algorithms are run with an $\ell_0$-constraint with fixed $K = 32$. In all three cases classical sparse coding is run using IHT \cite{2009_Blumensath_Iterative}, and classical dictionary learning alternates between IHT to update the sparse coefficients and gradient descent to update the dictionary. We perform 50 iterations for the sparse coding algorithms, and 50 iterations (with 20 iterations at each inner step) for dictionary learning. 

Figure \ref{fig:audio_declipping} shows the declipping performance, for different clipping levels ranging from $\theta = 0.1$ (highly clipped) to $\theta = 1$ (unclipped). Figure \ref{fig:audio_declipping} demonstrates several things: First, using measurement consistency greatly improves the reconstruction. Consistent sparse coding shows an improvement of up to 8dB compared to classical sparse coding. Consistent dictionary learning shows up to 10dB improvement compared to classical dictionary learning. This improvement is greater when the signals are highly distorted ($\theta \leq 0.5$). As expected, the two frameworks give equivalent results when $\theta = 1$, which shows how the proposed consistent framework naturally extends classical sparse coding. Second, we can see that consistent dictionary learning greatly improves the reconstruction performance compared to using a fixed DCT dictionary, since it adapts the dictionary to the signals of interest. This shows that the learned dictionary, although it has been learned using only the low-energy unclipped samples, along with consistency penalties, generalizes well to the unobserved clipped samples. In particular, the proposed consistent dictionary learning algorithm outperforms all the other methods. Finally, it is interesting to point out that when the signals are highly clipped ($\theta = 0.1)$, classical dictionary learning does not improve compared to classical sparse coding with DCT. This is probably due to a lack of data to learn from, since most of the data is clipped and discarded. Our consistent dictionary learning algorithm on the other hand, makes use of the clipped data, and is able to learn and improve the performance by 1.7dB.

Figure \ref{fig:audio_dequant} shows the results for quantization, from highly quantized ($N_b = 2$ bits) to lightly quantized ($N_b =8$). Similarly, we can see that using measurement consistency improves the performance when the signals are heavily quantized $N_b \leq 5$. As expected, the performance of the consistent framework and of the classical one are comparable when the signals are lightly distorted ($N_b = 8$). Dictionary learning improves the performance (compared to using a fixed dictionary), and the proposed consistent dictionary learning algorithm outperforms the other methods.

\makeatletter%
\if@twocolumn%
\begin{figure}[h!]
	\centering
	\begin{subfigure}[h!]{\columnwidth}
		\includegraphics[width=\textwidth]{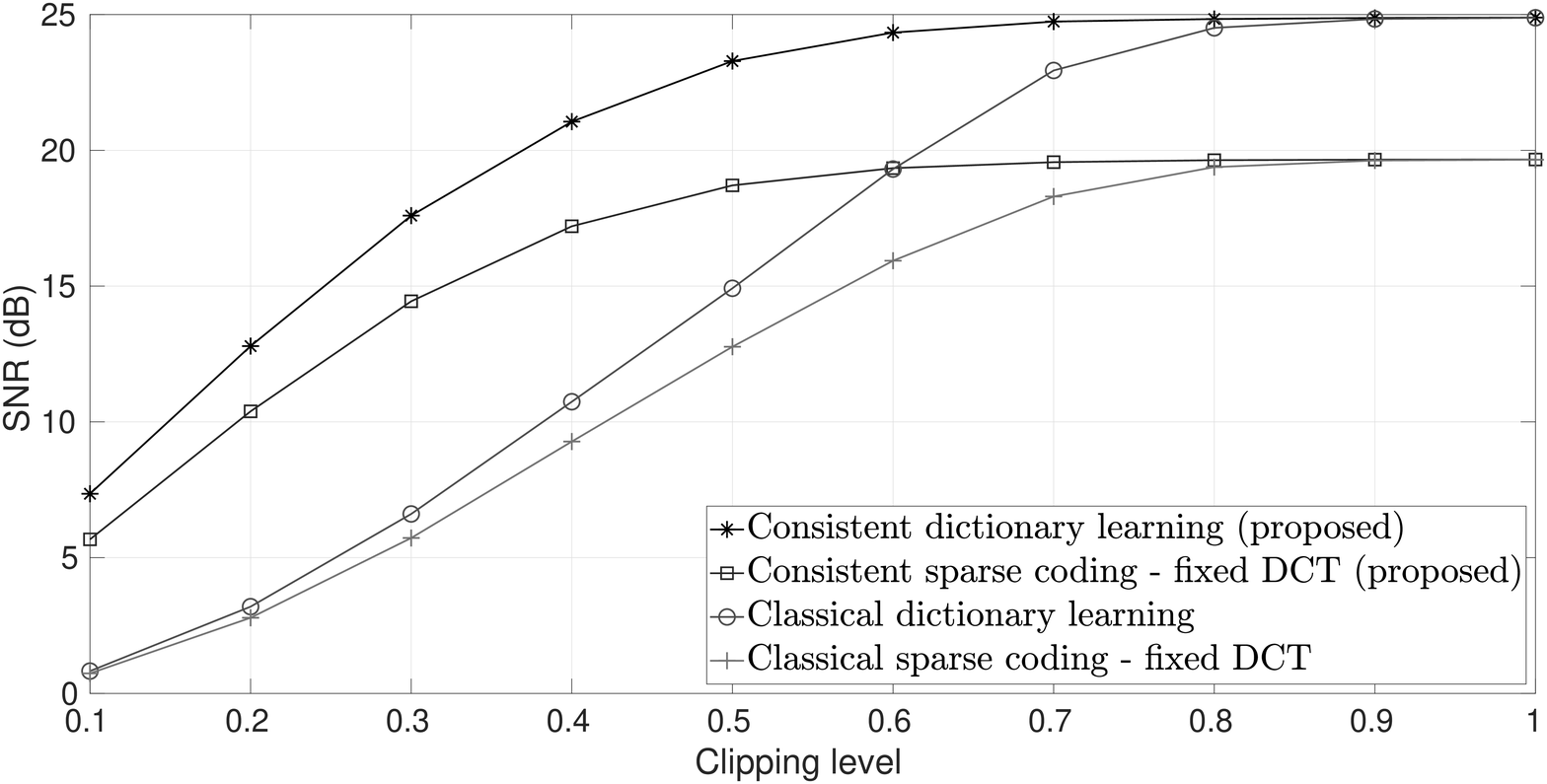}
		\caption{Declipping}
		\label{fig:audio_declipping}
	\end{subfigure}
	~
	\begin{subfigure}[h!]{\columnwidth}
		\includegraphics[width=\textwidth]{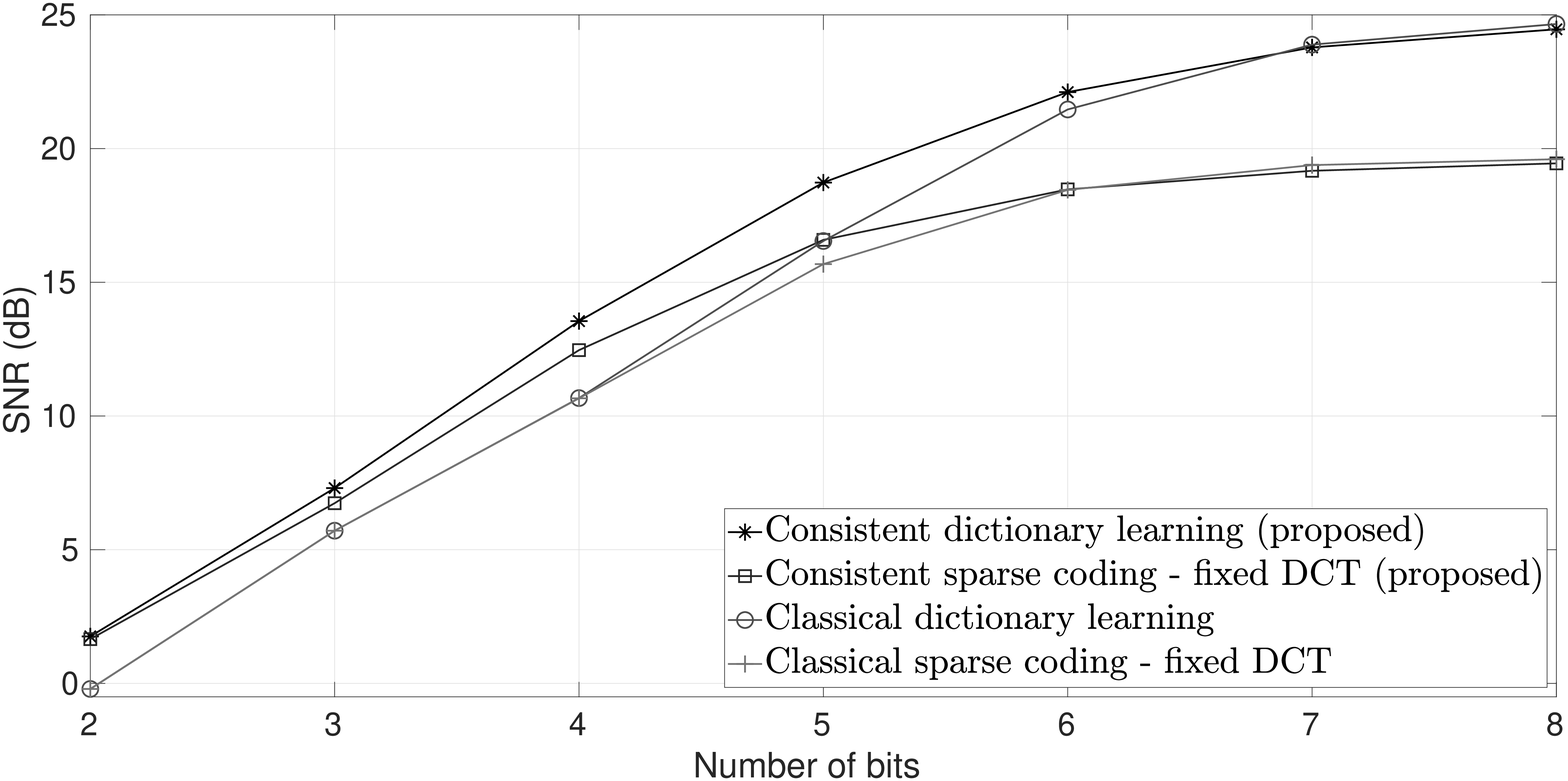}
		\caption{De-quantization}
		\label{fig:audio_dequant}
	\end{subfigure}
	\caption{Comparison of the proposed consistent sparse coding and dictionary learning algorithms, compared to classical sparse coding and dictionary learning.}
	\label{fig:exp_audio}
\end{figure}
\else
\begin{figure}[h!]
	\centering
	\begin{subfigure}[h!]{0.7\columnwidth}
		\includegraphics[width=\textwidth]{fig/results_declipping.eps}
		\caption{Declipping}
		\label{fig:audio_declipping}
	\end{subfigure}
	~
	\begin{subfigure}[h!]{0.7\columnwidth}
		\includegraphics[width=\textwidth]{fig/results_quantization.eps}
		\caption{De-quantization}
		\label{fig:audio_dequant}
	\end{subfigure}
	\caption{Comparison of the proposed consistent sparse coding and dictionary learning algorithms, compared to classical sparse coding and dictionary learning.}
	\label{fig:exp_audio}
\end{figure}
\fi
\makeatother

The results for 1-bit data are shown in Table \ref{table:consDL}. The results for sparse coding and consistent sparse coding are comparable (note that here for simplicity we don't enforce the signal to be on the unit circle, unlike in \cite{2008_Boufounos_1,2009_Boufounos_Greedy}). Classical dictionary learning here performs worse than classical sparse coding with a fixed dictionary, presumably because the reconstructed signal overfits the $\pm 1$ sign measurements. This shows that classical dictionary learning does not perform well with 1-bit data. The proposed consistent dictionary learning however, outperforms consistent sparse coding and classical dictionary learning.


\section{Conclusion}
\label{sec:conclusion}

We have presented a unified framework for signal reconstruction from certain types of nonlinear measurements such as clipping, quantization and 1-bit measurements. We proposed a cost function that takes into account prior knowledge about the measurement process, by minimizing the distance to the pre-image of the received signal. When the pre-image is convex, we have shown that the proposed cost is a convex, smooth and Lipschitz differentiable, which makes it ideal for proximal-based algorithms. The proposed cost generalizes the linear least-squares commonly used in sparse coding and dictionary learning, as well as cost functions proposed for declipping and 1-bit recovery. We proposed proximal based sparse coding and dictionary learning algorithms, which naturally extend classical algorithms and can deal with clipped, quantized and 1-bit measurements.

\appendices

\section{Projection Theorem}
\label{app:projection_theorem}

We recall the following theorem \cite[Prop. B.11]{1999_Bertsekas_Nonlinear}:
\begin{theorem}[Projection Theorem \protect{\cite[Prop. B.11]{1999_Bertsekas_Nonlinear}}]\label{th:projection_theorem}
	Let $\mathcal{C}$ be a closed convex set in $\mathbb{R}^N$. Then, the following hold:
	
	a) For every $\x \in \mathbb{R}^N$, there exists a unique $\z^* \in \mathcal{C}$ such that $\z^*$ minimizes $\|\x-\z\|_2$ over all $\z \in \mathcal{C}$. $\z^*$ is called the \emph{projection} of $\x$ onto $\mathcal{C}$ and is noted $\Pi_{\mathcal{C}}(\x)$. In other words:
	\begin{equation}\label{eq:projection}
	\Pi_{\mathcal{C}}(\x) \triangleq \argmin_{\z \in \mathcal{C}} \|\x-\z\|_2.
	\end{equation}
	
	b) For $\x \in \mathbb{R}^N$, $\z^* = \Pi_{\mathcal{C}}(\x)$ if and only if:
	\begin{equation}
	(\z-\z^*)^T(\x-\z^*) \leq 0 \quad \forall \z \in \mathcal{C}.
	\end{equation}
	
	c) $\x \mapsto \Pi_{\mathcal{C}}(\x)$ is continuous and non-expansive, i.e:
	\begin{equation}
	\|\Pi_{\mathcal{C}}(\x_1)-\Pi_{\mathcal{C}}(\x_2)\|_2 \leq \|\x_1- \x_2\|_2 \quad \forall \x_1, \x_2 \in \mathbb{R}^N.
	\end{equation}
\end{theorem}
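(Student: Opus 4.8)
The plan is to prove the three parts in order, since part (a) sets up the object, part (b) supplies the variational characterization that does all the real work, and part (c) follows almost immediately from (b). Throughout I would work with $\|\x-\z\|_2^2$ rather than $\|\x-\z\|_2$, which has the same minimizers but is smooth and strictly convex.

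For existence in part (a), the subtlety is that $\mathcal{C}$ need not be bounded, so Weierstrass does not apply directly. First I would fix any reference point $\z_0 \in \mathcal{C}$ and note that any minimizer must lie in the closed ball $\{\z \mid \|\x-\z\|_2 \leq \|\x-\z_0\|_2\}$. Intersecting this ball with $\mathcal{C}$ gives a set that is closed and bounded, hence compact, and the continuous map $\z \mapsto \|\x-\z\|_2$ attains its infimum there; this infimum is global over all of $\mathcal{C}$. For uniqueness I would exploit strict convexity: if two minimizers $\z_1^*, \z_2^*$ existed, their midpoint lies in $\mathcal{C}$ by convexity, and the parallelogram identity forces its squared distance to $\x$ to be strictly smaller unless $\z_1^* = \z_2^*$, a contradiction.

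For part (b) I would argue both directions by a first-order computation along line segments. For necessity, given the minimizer $\z^*$ and any $\z \in \mathcal{C}$, convexity gives $\z^* + t(\z-\z^*) \in \mathcal{C}$ for $t \in [0,1]$; setting $g(t) = \|\x-\z^*-t(\z-\z^*)\|_2^2$, optimality at $t=0$ yields $g'(0) \geq 0$, which expands exactly to $(\z-\z^*)^T(\x-\z^*) \leq 0$. For sufficiency I would expand $\|\x-\z\|_2^2 = \|\x-\z^*\|_2^2 - 2(\x-\z^*)^T(\z-\z^*) + \|\z-\z^*\|_2^2$, discard the middle cross term using the hypothesis, and invoke nonnegativity of the last term to conclude $\|\x-\z\|_2^2 \geq \|\x-\z^*\|_2^2$ for every $\z \in \mathcal{C}$.

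Part (c) I expect to be the shortest. Writing $\z_i^* = \Pi_{\mathcal{C}}(\x_i)$, I would apply the inequality from (b) twice—once with test point $\z_2^*$ in the characterization for $\z_1^*$, once with $\z_1^*$ in that for $\z_2^*$—then add and rearrange to obtain $\|\z_1^*-\z_2^*\|_2^2 \leq (\z_1^*-\z_2^*)^T(\x_1-\x_2)$. Cauchy–Schwarz then yields $\|\z_1^*-\z_2^*\|_2 \leq \|\x_1-\x_2\|_2$, which is non-expansiveness, and continuity is an immediate consequence. The main obstacle is really the existence argument in (a)—being careful to reduce to a compact set before invoking Weierstrass—together with fixing the sign conventions in (b) exactly, since both the sufficiency argument and the entire proof of (c) hinge on that single variational inequality.
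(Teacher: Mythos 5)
Your proof is correct, but note that the paper does not actually prove this theorem: it is recalled verbatim from Bertsekas \cite[Prop.\ B.11]{1999_Bertsekas_Nonlinear} and used as an imported tool, so there is no in-paper argument to compare against. What you have written is the standard textbook proof, and each step checks out: the reduction of the existence question to the compact set $\mathcal{C} \cap \{\z : \|\x-\z\|_2 \leq \|\x-\z_0\|_2\}$ correctly handles unbounded $\mathcal{C}$ before invoking Weierstrass; the parallelogram-identity argument for uniqueness is sound; the two directions of the variational characterization in (b) are complete (the first-order condition $g'(0)\geq 0$ for necessity, the three-term expansion for sufficiency); and the derivation of $\|\z_1^*-\z_2^*\|_2^2 \leq (\z_1^*-\z_2^*)^T(\x_1-\x_2)$ followed by Cauchy--Schwarz gives non-expansiveness. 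It is worth observing that this last inequality is exactly the one the paper re-derives as equation \eqref{eq:ineq} in its proof of Proposition \ref{prop:lipschitz} (the Lipschitz-gradient property of $\Lcal$), so your part (c) is in effect the argument the authors do spell out elsewhere, just packaged inside the theorem rather than downstream of it. One cosmetic point: since you minimize $\|\x-\z\|_2^2$ throughout, you should state explicitly (as you hint at the outset) that its minimizers over $\mathcal{C}$ coincide with those of $\|\x-\z\|_2$ because $t \mapsto t^2$ is strictly increasing on $[0,\infty)$; with that one line added the proof is airtight.
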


\section{Danskin's Min-Max Theorem}
\label{app:Danskin}

\begin{theorem}[Danskin's Min-Max Theorem \protect{\cite[Section 4.1]{1998_Bonnans_Optimization}}, \cite{1967_Danskin_theory}]\label{th:danskin}
	Let $\mathcal{C}$ be a compact\footnote{Note that compactness is only required to ensure existence of a minimum, according to Weierstrass' theorem.} set, and $g(\x) = \min_{\z \in \mathcal{C}} \phi(\x,\z)$. Suppose that for each $\z \in \mathbb{R}^N$, $\phi(\cdot,\z)$ is differentiable with gradient $\nabla_{\x} \phi(\x,\z)$, and $\phi(\x,\z)$ and $\nabla_{\x} \phi(\x,\z)$ are continuous on $\mathbb{R}^N \times \mathbb{R}^N$. Define $\mathcal{Z}(\x) = \argmin_{\z \in \mathcal{C}} \phi(\x,\z)$. Then $g$ is directionally differentiable, with derivative in the direction $\h$:
	\begin{equation}
	\nabla g(\x;\h) = \min_{\z \in \mathcal{Z}(\x)} \nabla_{\x} \phi(\x,\z)^T\h \quad \forall \h,
	\end{equation}
	In particular, when the minimum is attained at a unique point ($\mathcal{Z}(\x) = \{\z^*\}$), $g$ is differentiable with gradient:
	\begin{equation}
	\nabla g(\x) = \nabla_{\x} \phi(\x, \z^*).
	\end{equation}
\end{theorem}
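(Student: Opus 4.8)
The plan is to compute the one-sided directional derivative $\nabla g(\x;\h) = \lim_{t \to 0^+} \frac{g(\x+t\h)-g(\x)}{t}$ directly, by squeezing it between a matching $\limsup$ upper bound and $\liminf$ lower bound, and then to specialize to the singleton case. Throughout I fix $\x$ and a direction $\h$.

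First, for the upper bound, I would exploit that any $\z^* \in \mathcal{Z}(\x)$ is feasible for the minimization defining $g(\x+t\h)$. Since $g(\x+t\h) \leq \phi(\x+t\h,\z^*)$ while $g(\x) = \phi(\x,\z^*)$, the difference quotient is bounded above by $\frac{\phi(\x+t\h,\z^*)-\phi(\x,\z^*)}{t}$, whose limit as $t \to 0^+$ is $\nabla_\x\phi(\x,\z^*)^T\h$ by differentiability of $\phi(\cdot,\z^*)$. Taking the infimum over $\z^* \in \mathcal{Z}(\x)$ gives $\limsup_{t\to 0^+} \frac{g(\x+t\h)-g(\x)}{t} \leq \min_{\z \in \mathcal{Z}(\x)} \nabla_\x\phi(\x,\z)^T\h$.

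Next, for the lower bound, for each small $t > 0$ I would pick a minimizer $\z_t \in \mathcal{Z}(\x+t\h)$, which exists by compactness of $\mathcal{C}$ and continuity of $\phi$ via Weierstrass. Since $g(\x) \leq \phi(\x,\z_t)$, I obtain $g(\x+t\h)-g(\x) \geq \phi(\x+t\h,\z_t)-\phi(\x,\z_t)$, and the mean value theorem rewrites the right-hand side as $t\,\nabla_\x\phi(\x+s_t t\h,\z_t)^T\h$ for some $s_t \in (0,1)$. To pass to the limit along a sequence $t_k \to 0^+$ realizing the $\liminf$, I would use compactness to extract a subsequence with $\z_{t_k} \to \bar\z \in \mathcal{C}$; then continuity of $\phi$ together with continuity of $g$ (which itself follows from compactness of $\mathcal{C}$ and joint continuity of $\phi$) forces $\phi(\x,\bar\z) = g(\x)$, i.e.\ $\bar\z \in \mathcal{Z}(\x)$, while continuity of $\nabla_\x\phi$ yields $\nabla_\x\phi(\x+s_{t_k}t_k\h,\z_{t_k})^T\h \to \nabla_\x\phi(\x,\bar\z)^T\h \geq \min_{\z\in\mathcal{Z}(\x)} \nabla_\x\phi(\x,\z)^T\h$. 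Combining the two bounds shows the limit exists and equals $\min_{\z\in\mathcal{Z}(\x)}\nabla_\x\phi(\x,\z)^T\h$.

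The main obstacle is precisely this lower-bound argument: the minimizers $\z_t$ need not converge as $t \to 0^+$, so I must work along subsequences and carefully justify that every subsequential limit of $\z_t$ lands in $\mathcal{Z}(\x)$ — this is where compactness of $\mathcal{C}$ and the joint continuity of both $\phi$ and $\nabla_\x\phi$ are essential. Finally, for the singleton case $\mathcal{Z}(\x) = \{\z^*\}$, the formula reduces to $\nabla g(\x;\h) = \nabla_\x\phi(\x,\z^*)^T\h$, which is linear in $\h$; this identifies $\nabla_\x\phi(\x,\z^*)$ as the gradient, and the joint continuity of $\nabla_\x\phi$ upgrades this Gateaux-type derivative to genuine (Fr\'echet) differentiability.
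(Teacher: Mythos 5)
The paper does not actually prove this theorem: it is imported verbatim as a cited classical result (Danskin, and Bonnans--Shapiro Section 4.1), so there is no in-paper proof to compare against. Your argument is the standard textbook proof of Danskin's theorem and is essentially correct: the upper bound on the difference quotient via feasibility of any $\z^* \in \mathcal{Z}(\x)$ for the perturbed problem, and the lower bound via a minimizer $\z_t \in \mathcal{Z}(\x+t\h)$, the mean value theorem, and a compactness/subsequence argument showing every cluster point of $\z_t$ lies in $\mathcal{Z}(\x)$ — this is exactly how the cited references establish the directional-derivative formula, and you correctly identify the subsequence step as the crux. The one place you gloss is the final upgrade in the singleton case: linearity of $\h \mapsto \nabla_\x\phi(\x,\z^*)^T\h$ only gives Gateaux differentiability a priori, and "joint continuity of $\nabla_\x\phi$" is not by itself the reason this becomes Fr\'echet differentiability. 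The standard fix is to note that $g$ is locally Lipschitz (with constant $\sup_{\z\in\mathcal{C}}\|\nabla_\x\phi(\cdot,\z)\|$ over a compact neighborhood, finite by joint continuity and compactness of $\mathcal{C}$), and that for a locally Lipschitz function on $\mathbb{R}^N$ directional differentiability is automatically uniform over directions in the unit sphere, so a linear directional derivative is a Fr\'echet derivative. This is a routine fill-in, so the proposal stands as a correct, self-contained proof of the result the paper only cites; note also that in the paper's actual application ($\phi(\x,\z)=\tfrac12\|\x-\z\|_2^2$, $\Cy$ closed convex) compactness of $\mathcal{C}$ is relaxed exactly as the footnote indicates, since existence and uniqueness of the minimizer come from the Projection Theorem rather than Weierstrass.
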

In other words, Danskin's Min-Max theorem says that if the minimum over a family of continuous and continuously differentiable functions is attained at a unique point $\z^*$, then the gradient of the minimum over this family of functions can be computed by simply evaluating that gradient at the optimum $\z^*$.

\section{Proofs of Proposition 1-3 and 5}
\label{app:proofs}

\begin{proof}[Proof of Proposition \ref{prop:convexity}:]
$\Lcal(\cdot,\y)$ is a minimum of a family of convex functions $(\x,\z) \mapsto \frac{1}{2}\|\x-\z\|_2^2$ over a non-empty convex set, so by \cite[Section 3.2.5]{2004_Boyd_Convex}, $\Lcal(\cdot,\y)$ is convex.
\end{proof}

\begin{proof}[Proof of Proposition \ref{prop:smoothness}:]
$\Lcal(\x, \y) = \min_{\z \in \Cy} \dEU(\x, \z)$ with $\dEU(\x,\z) = \frac{1}{2}\|\x-\z\|_2^2$. For all $\z \in \mathbb{R}^N, \dEU(\cdot,\z)$ is differentiable with gradient $\nabla_{\x} \dEU(\x,\z) = \x-\z$. Furthermore, $\dEU(\x,\z)$ and $\nabla_{\x}\dEU(\x,\z)$ are continuous in $(\x,\z)$, and $\mathcal{Z}(\x) = \argmin_{\z \in \Cy} \dEU(\x,\z)$ is uniquely defined as $\mathcal{Z}(\x) = \{\Pi_{\Cy}(\x)\}$ by the Projection Theorem. Using Danskin's Min-Max theorem, we can then conclude that:
\begin{equation}
\begin{split}
\nabla_\x \Lcal(\x,\y) = & \nabla_{\x} \dEU(\x, \Pi_{\Cy}(\x))\\
= & \x-\Pi_{\Cy}(\x).
\end{split}
\end{equation}
\end{proof}

\begin{proof}[Proof of Proposition \ref{prop:lipschitz}:]
Let $\x_1, \x_2 \in \R^N$. By the projection theorem, we have:
\begin{equation}
\begin{split}
(\Proj(\x_2)-\Proj(\x_1))^T(\x_1-\Proj(\x_1)) & \leq 0\\
(\Proj(\x_1)-\Proj(\x_2))^T(\x_2-\Proj(\x_2)) & \leq 0.
\end{split}
\end{equation}
Adding and rearranging these two equations gives:
\begin{equation}
\label{eq:ineq}
\begin{split}
\|\Proj(&\x_1)- \Proj(\x_2)\|_2^2  \\ \leq & (\Proj(\x_1)-\Proj(\x_2))^T(\x_1-\x_2).
\end{split}
\end{equation}
We can then show that:
\begin{equation}
\begin{split}
\|&\nabla_{\x}\Lcal(\x_1, \y)-\nabla_{\x}\Lcal(\x_2, \y)\|^2_2 \\
& = \|\x_1-\Proj(\x_1)-(\x_2-\Proj(\x_2))\|^2_2\\
& = \|\x_1-\x_2\|^2_2+\|\Proj(\x_1)-\Proj(\x_2)\|^2_2\\
& \qquad -2(\x_1-\x_2)^T(\Proj(\x_1)-\Proj(\x_2))\\
& \leq \|\x_1-\x_2\|^2_2-\|\Proj(\x_1)-\Proj(\x_2)\|^2_2\\
& \leq  \|\x_1-\x_2\|_2^2
\end{split}
\end{equation} 
where we have used \eqref{eq:ineq} in the third line.
\end{proof}

\begin{proof}[Proof of Proposition \ref{prop:paramfree}]
	The proof is given assuming exact estimation of the $\bfalpha^k$ coefficients, however similar results for approximate estimates are also available \cite{1999_Bertsekas_Nonlinear}.
	We have by definition of $\bfalpha^k$ and $\bfalpha^{k+1}$:
	\begin{equation}\label{eq:proof}
	\begin{cases}
	&F(\lambda^{k+1}, \bfalpha^{k+1}) \leq F(\lambda^{k+1}, \bfalpha^{k})\\
	&F(\lambda^{k}, \bfalpha^{k}) \leq F(\lambda^{k}, \bfalpha^{k+1})
	\end{cases}
	\end{equation}
	Summing and rearranging these two equations gives:
	\begin{equation}
	(\lambda^{k+1} - \lambda^{k})(\Psi(\bfalpha^{k+1})-\Psi(\bfalpha^{k})) \leq 0,
	\end{equation}
	Since $\lambda^{k+1} < \lambda^k$, we have $\Psi(\bfalpha^{k+1}) \geq \Psi(\bfalpha^{k})$, from which follows (by combining with e.g., the first line of \eqref{eq:proof}) $\Lcal(\D\bfalpha^{k+1},\y) \leq \Lcal(\D\bfalpha^k,\y)$.
	
	We have furthermore that:
	\begin{equation}
	\min_{\bfalpha} F(\lambda^k,\bfalpha) \leq \min_{\D\bfalpha \in \Cy} F(\lambda^k,\bfalpha) = \lambda^k \Psi^*
	\end{equation}
	Where $\Psi^*$ is the optimum value of \eqref{eq:constrained}. In other words:
	\begin{equation}\label{eq:limit}
	0 \leq \Lcal(\D\bfalpha^{k},\y) + \lambda^k\Psi(\bfalpha^{k}) \leq \lambda^k \Psi^*
	\end{equation}
	When $\{\lambda^k\} \rightarrow 0$, taking the limit in \eqref{eq:limit} shows that $\lim_k \Lcal(\D\bfalpha^{k},\y) = 0$ and $\lim_k \Psi(\bfalpha^{k}) \leq \Psi^*$. In particular, $F(\lambda^k, \bfalpha^k)$ converges to the optimum value $\Psi^*$ of the constrained problem \eqref{eq:constrained}.
\end{proof}

\ifCLASSOPTIONcaptionsoff
  \newpage
\fi



%

\bibliographystyle{IEEEtran}
\bibliography{IEEEabrv,nonlinearDL.bib}

\end{document}